\documentclass[11pt]{article}

\usepackage[a4paper,margin=1in]{geometry}
\usepackage{amsmath,amssymb,amsfonts,amsthm,mathtools}
\usepackage{bm}
\usepackage{booktabs,longtable,array}
\usepackage{ragged2e}
\usepackage{tabularx}
\usepackage{multicol}
\usepackage{enumitem}
\usepackage{microtype}
\usepackage{hyperref}
\usepackage{xcolor}
\usepackage[nameinlink,capitalise,noabbrev]{cleveref}

\hypersetup{
  colorlinks=true,
  linkcolor=blue!45!black,
  citecolor=blue!45!black,
  urlcolor=blue!45!black,
  pdfauthor={Faruk Alpay and Taylan Alpay},
  pdftitle={Composable Post-Quantum Security for FADEC-Coupled Dual-Spool Turbofan Cyber-Physical Systems}
}

\newtheorem{definition}{Definition}
\newtheorem{assumption}{Assumption}
\newtheorem{lemma}{Lemma}
\newtheorem{theorem}{Theorem}
\newtheorem{corollary}{Corollary}
\newtheorem{proposition}{Proposition}
\newtheorem{counterexample}{Counterexample}
\newtheorem{remark}{Remark}

\crefname{theorem}{theorem}{theorems}
\Crefname{theorem}{Theorem}{Theorems}
\crefname{lemma}{lemma}{lemmas}
\Crefname{lemma}{Lemma}{Lemmas}
\crefname{proposition}{proposition}{propositions}
\Crefname{proposition}{Proposition}{Propositions}
\crefname{corollary}{corollary}{corollaries}
\Crefname{corollary}{Corollary}{Corollaries}
\crefname{definition}{definition}{definitions}
\Crefname{definition}{Definition}{Definitions}
\crefname{assumption}{assumption}{assumptions}
\Crefname{assumption}{Assumption}{Assumptions}
\crefname{counterexample}{counterexample}{counterexamples}
\Crefname{counterexample}{Counterexample}{Counterexamples}
\crefname{remark}{remark}{remarks}
\Crefname{remark}{Remark}{Remarks}

\newcommand{\R}{\mathbb{R}}
\newcommand{\Z}{\mathbb{Z}}
\newcommand{\N}{\mathbb{N}}

\newcommand{\E}{\mathbb{E}}
\newcommand{\Prob}{\mathbb{P}}
\newcommand{\Adv}{\mathsf{Adv}}
\newcommand{\negl}{\mathsf{negl}}

\newcommand{\KEM}{\mathsf{KEM}}
\newcommand{\Enc}{\mathsf{Enc}}
\newcommand{\Encap}{\mathsf{Encap}}

\newcommand{\Tag}{\mathsf{Tag}}
\newcommand{\Ver}{\mathsf{Ver}}
\newcommand{\Gen}{\mathsf{Gen}}
\newcommand{\Real}{\mathsf{Real}}
\newcommand{\Ideal}{\mathsf{Ideal}}
\newcommand{\Sim}{\mathsf{Sim}}

\newcommand{\Leak}{\mathsf{Leak}}
\newcommand{\MAC}{\mathsf{MAC}}
\newcommand{\AEAD}{\mathsf{AEAD}}
\newcommand{\PUF}{\mathsf{PUF}}
\newcommand{\LWE}{\mathsf{LWE}}
\newcommand{\SIS}{\mathsf{SIS}}

\newcommand{\EUFCMA}{\mathsf{EUF\mbox{-}CMA}}

\newcommand{\norm}[1]{\left\lVert #1 \right\rVert}
\newcommand{\abs}[1]{\left\lvert #1 \right\rvert}

\newcommand{\trans}{^{\mathsf T}}

\newcommand{\Tr}{\operatorname{Tr}}
\newcommand{\sat}{\operatorname{sat}}

\newcommand{\col}{\operatorname{col}}

\newcommand{\dd}{\,\mathrm d}

\newcommand{\Hsmooth}{H_{\infty}^{\varepsilon}}

\newcommand{\A}{\mathcal A}
\newcommand{\B}{\mathcal B}

\newcommand{\U}{\mathcal U}

\newcommand{\Zed}{\mathcal Z}
\newcommand{\Key}{\mathcal K}
\newcommand{\Chan}{\mathcal C_{\rm ch}}
\newcommand{\Bus}{\mathcal B_{\rm bus}}
\newcommand{\Plant}{\mathcal P_{\rm eng}}
\newcommand{\FADEC}{\mathsf{FADEC}}

\newcolumntype{L}[1]{>{\RaggedRight\arraybackslash}p{#1}}
\newcolumntype{Y}{>{\RaggedRight\arraybackslash}X}

\makeatletter
\renewenvironment{thebibliography}[1]
     {\section*{\refname}
      \footnotesize
      \list{\@biblabel{\@arabic\c@enumiv}}%
           {\settowidth\labelwidth{\@biblabel{#1}}%
            \leftmargin\labelwidth
            \advance\leftmargin\labelsep
            \itemsep 0pt
            \parsep 0pt
            \topsep 0pt
            \partopsep 0pt
            \itemindent 0pt
            \usecounter{enumiv}%
            \let\p@enumiv\@empty
            \renewcommand\theenumiv{\@arabic\c@enumiv}}%
      \sloppy\clubpenalty4000\widowpenalty4000\sfcode`\.=\@m}
     {\def\@noitemerr
       {\@latex@warning{Empty `thebibliography' environment}}%
      \endlist}
\makeatother

\title{\Large\bfseries Composable Post-Quantum Security for FADEC-Coupled Dual-Spool Turbofan Cyber-Physical Systems}

\author{
\begin{tabular}[t]{@{}c@{}}
\normalsize Faruk Alpay\\[0.25em]
\small Department of Computer Engineering\\
\small Bahçeşehir University\\
\small Istanbul, Türkiye\\[0.25em]
\small\texttt{faruk.alpay@bahcesehir.edu.tr}
\end{tabular}
\hspace{2.75em}
\begin{tabular}[t]{@{}c@{}}
\normalsize Taylan Alpay\\[0.25em]
\small Department of Aerospace Engineering\\
\small University of Turkish Aeronautical Association\\
\small Ankara, Türkiye\\[0.25em]
\small\texttt{s220112602@stu.thk.edu.tr}
\end{tabular}
}

\date{}

\begin{document}
\maketitle

\begin{abstract}
We develop a unified mathematical formulation for post-quantum authenticated telemetry and actuation in FADEC-coupled dual-spool turbofan cyber-physical systems. The formulation integrates lattice-based key establishment under LWE/SIS-style assumptions, PUF-derived attestation entropy, authenticated encryption, radar-altimeter integrity, avionics-bus timing, and Kalman residual monitoring within a stochastic hybrid model. In this model, plant evolution, communication latency, leakage, adversarial channel quality, and cryptographic state evolve under a common filtration. We show that channel uncertainty tightens admissible key-renewal periods, that ciphertext expansion enters bus-level schedulability constraints, and that sensing and actuator limits shape integrity thresholds and allowable control delay. We further relate PUF smooth min-entropy to distinguishing advantage and connect innovation statistics to conservative alarm design. Overall, the results characterize how post-quantum security, real-time schedulability, and closed-loop stability interact in safety-critical aerospace control architectures within a defensive analytical treatment that does not provide operational guidance for interference with real platforms.
\end{abstract}

\noindent\textbf{Keywords:} post-quantum cryptography, FADEC, dual-spool turbofan, universal composability, real-time schedulability, input-to-state stability.

\section{Introduction}

\subsection*{Notation and symbol definitions}
Let $\lambda\in\N$ denote the cryptographic security parameter. All negligible functions in $\lambda$ are written $\negl(\lambda)$. Let $t\in\R_{\ge 0}$ denote continuous physical time and let $k\in\N$ index discrete control and communication epochs. Let $h_k=t_{k+1}-t_k$ be the authenticated telemetry sampling interval. The dual-spool turbofan state is
\[
x_{\rm e}(t)=
\begin{bmatrix}
N_L(t)&N_H(t)&T_{t4}(t)&\pi_c(t)&\dot m_c(t)&\delta_{\rm tc}(t)&\theta_s(t)&\omega_s(t)
\end{bmatrix}^{\trans},
\]
where $N_L,N_H$ are low-pressure and high-pressure spool speeds, $T_{t4}$ is turbine inlet temperature, $\pi_c$ is compressor pressure ratio, $\dot m_c$ is corrected mass flow, $\delta_{\rm tc}$ is blade-tip clearance perturbation, and $(\theta_s,\omega_s)$ are shaft torsional displacement and angular velocity. The full-authority digital engine control command is $u(t)=\col(w_f(t),\alpha_v(t))$, with fuel flow $w_f$ and variable-geometry or bleed-related actuator component $\alpha_v$. Let $y_k$ be a vector of authenticated telemetry containing shaft-speed, vibration, exhaust-gas-temperature residual, engine-pressure-ratio, health-monitoring, inertial-navigation, and radar-altimeter channels. Let $z_k$ be the adversary's leakage observation. Let $K_k$ be a session key, $c_k$ a post-quantum key-encapsulation ciphertext, $\tau_k$ an attestation transcript, and $\sigma_k$ an integrity tag.

\subsection*{Assumption set}
The manuscript assumes: (i) the relevant lattice problems underlying the selected post-quantum key-encapsulation mechanism are hard for probabilistic polynomial-time adversaries; (ii) authenticated encryption behaves according to standard nonce-respecting security definitions; (iii) the physical model is a local linearization around certified operating envelopes rather than a replacement for engine certification artifacts; (iv) the buses considered here obey abstract timing semantics derived from ARINC 429, MIL-STD-1553, and CAN, without relying on any proprietary aircraft implementation; and (v) the attacker model is analytic and defensive, excluding procedural instructions for unauthorized access to actual avionics or propulsion systems.

\subsection*{Formal model}
The target object is a coupled tuple
\[
\mathfrak S_\lambda=
(\Plant,\FADEC,\Bus,\Chan,\PUF,\KEM,\AEAD,\Pi_{\rm zk},\mathcal F_{\rm auth},\mathcal F_{\rm key},\mathcal F_{\rm int}),
\]
where $\Plant$ is a stochastic hybrid dual-spool engine model, $\FADEC$ is a state-feedback and filtering controller, $\Bus$ is the bus-level response-time model, $\Chan$ is a radar-aware and Doppler-dependent communication channel, $\PUF$ is an attestation source, $\KEM$ is a lattice-based post-quantum encapsulation mechanism, $\AEAD$ is an authenticated encryption layer, $\Pi_{\rm zk}$ is a zero-knowledge avionics authentication protocol, and $\mathcal F_{\rm auth},\mathcal F_{\rm key},\mathcal F_{\rm int}$ are ideal authenticated-transcript, key-refresh, and integrity-filter functionalities in the sense of universal composability \cite{Canetti2001UC}.

\subsection*{Inference chain}
The manuscript develops one dependency chain:
\[
\begin{gathered}
\text{radar and Doppler uncertainty}\Rightarrow \text{channel entropy loss}\Rightarrow \text{key renewal constraint},\\
\text{shaft and compressor physics}\Rightarrow \text{tag verification, alarm thresholds, latency windows},\\
\text{PUF entropy and leakage}\Rightarrow \text{attestation distinguishing advantage},\\
\text{post-quantum ciphertext expansion}\Rightarrow \text{bus response time}\Rightarrow \text{closed-loop stability margin}.
\end{gathered}
\]
The model does not assert that physical parameters are cryptographic secrets. It asserts that physical parameters affect the rate, confidence, timing, and leakage conditions under which cryptographic assurances remain composable with safety-critical control.

\subsection*{Boundary conditions}
The analysis is local in the certified operating region $\Omega_{\rm cert}\subset\R^8$ and does not claim global modeling validity outside compressor-map, fuel-system, thermal, vibration, and telemetry envelopes. All attacker advantages are bounded with respect to abstract or standardized protocol semantics and do not instantiate a procedure for unauthorized interference with a real system.

\subsection*{Technical comment}
The contribution is not a new engine controller, a new aircraft bus, or a new public-key primitive. It is a coupled theorem system showing where separately valid cryptographic and control-theoretic models separate: in parameter regimes where ciphertext expansion, bus queuing, torsional compliance, surge-margin conservatism, leakage entropy, and Markov channel uncertainty cannot be optimized independently.

\section{Related Work}

\subsection*{Notation and symbol definitions}
Let $\mathsf{Lit}_{\rm pq}$ denote the set of lattice-based cryptographic references, $\mathsf{Lit}_{\rm uc}$ the composability literature, $\mathsf{Lit}_{\rm aead}$ authenticated-encryption standards and specifications, $\mathsf{Lit}_{\rm bus}$ avionics and embedded bus timing references, and $\mathsf{Lit}_{\rm prop}$ propulsion and gas-turbine dynamics literature. For any cited work $r$, $\mathsf{Scope}(r)$ denotes the mathematical domain in which the citation is used.

\subsection*{Assumption set}
References are used only for established definitions, models, standards, or algorithms. The bibliography deliberately avoids uncertain DOI, page, or edition details not needed for the manuscript's claims. Standard identifiers and years are included only where they are stable public facts.

\subsection*{Formal model}
The lattice layer relies on the hardness of $\LWE$ as introduced by Regev \cite{Regev2005LWE} and the short-integer-solution framework of Ajtai \cite{Ajtai1996SIS}. The composability layer follows Canetti's universal composability framework \cite{Canetti2001UC}. Authenticated encryption is linked to NIST GCM recommendations \cite{NISTSP80038D} and the IETF ChaCha20-Poly1305 specification \cite{RFC8439}. Zero-knowledge and proof-system notation follows the classical formulation of Goldwasser, Micali, and Rackoff \cite{GoldwasserMicaliRackoff1989}. Min-entropy and smooth entropy follow Renner \cite{Renner2005Thesis} and leftover-hash analyses trace to Impagliazzo, Levin, and Luby \cite{ImpagliazzoLevinLuby1989}. Leakage notions are aligned with information-theoretic and side-channel treatments, including Shannon \cite{Shannon1948}, Cover and Thomas \cite{CoverThomas2006}, Kocher's timing analysis \cite{Kocher1996Timing}, and Kocher, Jaffe, and Jun's differential power analysis \cite{KocherJaffeJun1999DPA}. PUF notation is grounded in optical and silicon PUF work \cite{Pappu2002PUF,Maes2013PUF}.

Aerospace and embedded-control terminology is anchored in gas-turbine performance and engine-control references \cite{Mattingly2006AircraftEngineDesign,WalshFletcher2004GasTurbinePerformance,JawMattingly2009AircraftEngineControls}, Kalman filtering \cite{Kalman1960}, stochastic hybrid and Markov-jump control models \cite{Kushner1967StochasticStability,Mariton1990JumpLinearSystems}, Lyapunov and input-to-state stability \cite{Khalil2002NonlinearSystems,JiangTeelPraly1994ISS}, small-gain arguments \cite{JiangTeelPraly1994ISS}, $H_\infty$ disturbance attenuation \cite{ZhouDoyleGlover1996RobustControl}, and real-time response-time analysis \cite{LiuLayland1973,JosephPandya1986,TindellBurnsWellings1995}. Bus semantics are treated at an abstract level using ARINC 429, MIL-STD-1553, and CAN references \cite{ARINC429,MILSTD1553,ISO11898,BoschCAN1991}.

At the level of proof obligations, the coupled parameter domain can be read as a product set
\[
\Theta_{\rm tot}=\Theta_{\rm pq}\times\Theta_{\rm puf}\times\Theta_{\rm aead}\times\Theta_{\rm bus}\times\Theta_{\rm radar}\times\Theta_{\rm eng}\times\Theta_{\rm filt},
\]
where separated arguments reason only over projections of $\Theta_{\rm tot}$. The present manuscript instead keeps the full product structure visible, because ciphertext expansion, entropy loss, innovation thresholds, and delay margins are all certified on the same admissible set rather than on mutually independent slices.

\subsection*{Inference chain}
Existing cryptography literature isolates adversarial advantage from physical plant stability. Existing turbofan control literature isolates stability from public-key ciphertext expansion. Embedded real-time scheduling literature gives response-time bounds but does not by itself account for lattice ciphertext size, attestation transcript length, and tag-verification error coupled to telemetry noise. The manuscript fills precisely that mathematical gap.

\subsection*{Boundary conditions}
The related work cited here supports formal vocabulary and bounding techniques, not any claim about a particular aircraft's implementation. No proprietary data, vendor-specific engine parameters, classified radar details, or operational maintenance procedures are used.

\subsection*{Technical comment}
A manuscript in cryptography and security must make its security object primary. Accordingly, propulsion, radar, and bus models enter as state-dependent constraints on composable secrecy, integrity, authentication, leakage, and schedulability, rather than as independent engineering exposition.

\section{Threat Model}

\subsection*{Notation and symbol definitions}
Let $\A$ be a probabilistic polynomial-time adversary interacting with oracles
\[
\mathcal O=\{\mathcal O_{\rm encap},\mathcal O_{\rm decap},\mathcal O_{\rm tag},\mathcal O_{\rm verify},\mathcal O_{\rm bus},\mathcal O_{\rm leak},\mathcal O_{\rm fault}\}.
\]
Let $\rho_k\in\mathcal R$ denote a Markov channel regime encoding radar-cross-section-aware uncertainty, Doppler attenuation, and link-budget degradation. Let $\phi_k\in\Phi$ denote a physical stress regime including surge-margin proximity, spool acceleration, torsional compliance, and fuel-actuator saturation. Let $\ell_k=\Leak(K_k,\tau_k,y_k,\phi_k,\rho_k)$ be an adversarial leakage variable. Let $F_k$ denote a transient fault variable whose support is restricted to abstract perturbations of computation, telemetry, or timing.

\subsection*{Assumption set}
\begin{assumption}[Defensive adversarial interface]\label{asm:defensive-interface}
The adversary may observe public traffic, choose nonces subject to explicit nonce-respecting restrictions when the experiment grants that power, obtain leakage bounded by smooth min-entropy and R\'enyi leakage parameters, inject abstract transient faults into the formal model, and schedule bus contention within declared timing envelopes. The adversary does not receive a recipe for attacking deployed avionics, physical access instructions, vendor-specific addressing, exploit code, or procedures for bypassing real safety interlocks.
\end{assumption}

\begin{assumption}[Cryptographic hardness]\label{asm:crypto-hardness}
For all probabilistic polynomial-time algorithms $\B$,
\[
\Adv_{\LWE}^{\B}(\lambda)+\Adv_{\SIS}^{\B}(\lambda)\le \negl(\lambda),
\]
where the $\LWE$ and $\SIS$ instances use parameter families in which the selected post-quantum key-encapsulation and proof commitments are assumed secure.
\end{assumption}

\subsection*{Formal model}
\begin{definition}[Aerospace-cryptographic experiment]\label{def:experiment}
For $b\leftarrow\{0,1\}$, define $\mathsf{Exp}^{b}_{\mathfrak S,\A}(\lambda)$ as follows. The challenger generates long-term avionics authentication material, PUF enrollment helper data, and KEM public keys. At epoch $k$, the challenger samples $\rho_k$ from a Markov chain $P_\rho$, evolves $x_{\rm e}(t)$ under $\Plant$, computes a telemetry vector $y_k$, forms an attestation transcript $\tau_k$, refreshes a session key $K_k$ if $k$ belongs to the renewal set $\mathcal K_{\rm ref}$, and gives $\A$ the public transcript together with leakage $\ell_k$. In the challenge epoch, the challenger uses either the real session key if $b=1$ or an independent uniform key if $b=0$. The advantage is
\[
\Adv_{\mathfrak S}^{\A}(\lambda)=
\abs{\Prob[\mathsf{Exp}^{1}_{\mathfrak S,\A}(\lambda)=1]-
\Prob[\mathsf{Exp}^{0}_{\mathfrak S,\A}(\lambda)=1]}.
\]
\end{definition}

\begin{definition}[Telemetry-integrity failure event]\label{def:integrity-failure}
For integrity threshold $\eta_k>0$, authenticated tag $\sigma_k$, and Kalman innovation residual $r_k$, define
\[
\mathsf{Fail}^{\rm int}_k=
\{\Ver_{K_k}(y_k,\sigma_k)=1,\ \norm{r_k}_{S_k^{-1}}\le \eta_k,\ y_k\notin\mathcal Y_{\rm safe}(x_{\rm e}(t_k))\}.
\]
The quantity $\Prob[\mathsf{Fail}^{\rm int}_k]$ is the probability that cryptographic verification and statistical residual filtering jointly accept an unsafe telemetry vector.
\end{definition}

\begin{definition}[Observable attack surface]\label{def:attack-surface}
At epoch $k$, the observable attack surface is the $\sigma$-field generated by the public KEM ciphertext $c_k$, attestation transcript $\tau_k$, public bus-visible timing trace, and the vibration and radar-altimeter components of telemetry $y_k$. Admissible leakage variables are required to be measurable with respect to that $\sigma$-field together with bounded side observations modeled by $\mathcal O_{\rm leak}$.
\end{definition}

\paragraph{Bayesian physical leakage functional.}
For latent regime variable $\Theta_k=\col(\rho_k,\phi_k)$ and leakage observation $\ell_k$, define
\[
\mathcal I_B(\ell_k;\Theta_k)=
\E_{\ell_k}
\log
\frac{\max_{\vartheta}\Prob[\Theta_k=\vartheta\mid \ell_k]}
{\max_{\vartheta}\Prob[\Theta_k=\vartheta]}.
\]
The interface is called $(\alpha_B,\beta_K)$-bounded if
\[
\mathcal I_B(\ell_k;\Theta_k)\le \alpha_B,
\qquad
I(\ell_k;K_k\mid \tau_k)\le \beta_K
\]
for every admissible epoch. The first inequality measures regime-identification gain from side information, while the second isolates the key-relevant component of the same observation. Both inequalities are interpreted relative to the observable attack surface defined above.

\subsection*{Inference chain}
The adversary's distinguishing advantage decomposes into lattice hardness, authenticated-encryption advantage, attestation entropy loss, leakage capacity, transient-fault acceptance, and real-time deadline-miss probability:
\[
\Adv_{\mathfrak S}^{\A}\le
\Adv_{\LWE}^{\B_1}+\Adv_{\SIS}^{\B_2}+\Adv_{\AEAD}^{\B_3}
+\Adv_{\rm zk}^{\B_4}
+\epsilon_{\rm puf}+\epsilon_{\rm leak}+\epsilon_{\rm fault}+\epsilon_{\rm rt}.
\]
A complementary refresh envelope, useful when the threat model is read as an entropy-flow constraint, is
\[
T_{{\rm key},k}
\le
\frac{\Hsmooth(K_k\mid \ell_k,\tau_k)-\kappa_{\min}-\Delta H_{{\rm ch},k}}
{\dot \ell_{\rm side}+\dot \ell_{\rm vib}\abs{\delta_{{\rm tc},k}}+C_{\A,k}},
\]
whenever both numerator and denominator are positive. In that form, renewal policy depends simultaneously on residual secrecy, vibration-mediated leakage, and radar-aware adversarial throughput.
Each term is later tied to physical parameters, for example $\epsilon_{\rm rt}$ to ciphertext expansion and bus queuing, $\epsilon_{\rm leak}$ to blade-tip-clearance-vibration coupling and PUF min-entropy, and $\epsilon_{\rm fault}$ to spool and thermal residual filtering.

\subsection*{Boundary conditions}
The threat model is meaningful only when $\rho_k$ and $\phi_k$ remain in declared analytic sets. If a stress regime leaves $\Omega_{\rm cert}$, the correct conclusion is not a cryptographic break; it is that the local theorem no longer certifies security-stability composability.

\subsection*{Technical comment}
Security is not equated with preventing a physical event. Security means that under explicit leakage, fault, bus, and plant abstractions, accepted telemetry and actuation transcripts remain indistinguishable, authentic, timely, and stable up to proved bounds.

\section{Coupled Aerospace-Cryptographic System Model}

\subsection*{Notation and symbol definitions}
Let the corrected mass flow and pressure ratio be locally linearized around $(N_{H0},\dot m_{c0},\pi_{c0})$:
\[
\Delta \pi_c=a_{\pi N}\Delta N_H+a_{\pi m}\Delta\dot m_c+\epsilon_{\pi},
\qquad
\Delta \dot m_c=a_{mN}\Delta N_H+a_{mu}\Delta w_f+\epsilon_m.
\]
The compressor operating-line displacement is
\[
d_{\rm op,k}=b_N\Delta N_H(k)+b_m\Delta \dot m_c(k)+b_u\Delta w_f(k),
\]
and the surge margin is
\[
M_{s,k}=M_{s,0}-\gamma_{\rm op}\abs{d_{\rm op,k}}-\gamma_{\pi}\abs{\epsilon_{\pi,k}}.
\]
Let $\Gamma_s>0$ denote shaft torsional compliance, $G_T(s)$ turbine torque transfer function, and $\Delta_T$ the torque propagation delay. Let $R_k$ be total bus response time for a message set containing KEM ciphertexts, zero-knowledge transcripts, AEAD ciphertexts, tags, and health telemetry.

\subsection*{Assumption set}
\begin{assumption}[Local dual-spool and FADEC envelope]\label{asm:local-envelope}
Inside $\Omega_{\rm cert}$, the plant admits the stochastic hybrid approximation
\[
\dd x_{\rm e}=f_{\rho,\phi}(x_{\rm e},u)\dd t+G_{\rho,\phi}(x_{\rm e})\dd W_t,\qquad
u=\sat_{\mathcal U}\bigl(K_x\hat x_k+K_r r_k\bigr),
\]
where $W_t$ is a Wiener process, $\hat x_k$ is a sensor-fusion estimate, and $\sat_{\mathcal U}$ enforces fuel-metering valve saturation and actuator rate limits.
\end{assumption}

\begin{assumption}[Bus abstraction]\label{asm:bus-abstraction}
Each communication task $i$ has payload length $L_i(\lambda)$ bits, period $P_i$, deadline $D_i$, priority $\pi_i$, release jitter $J_i$, and worst-case transmission time $C_i=L_i(\lambda)/B_{\rm bus}+C_i^{\rm proto}$, where $B_{\rm bus}$ is usable bus bitrate and $C_i^{\rm proto}$ accounts for ARINC 429 word timing, MIL-STD-1553 command-response semantics, or CAN arbitration overhead at the abstract timing level.
\end{assumption}

\subsection*{Formal model}
\begin{definition}[Unified state and transcript]\label{def:unified-state}
The unified state at epoch $k$ is
\[
X_k=\col(x_{{\rm e},k},\hat x_k,K_k,\tau_k,\sigma_k,\rho_k,\phi_k,Q_k),
\]
where $Q_k$ is the vector of bus queues. Its transition kernel is
\[
\mathsf P(X_{k+1}\in A\mid X_k)=
\int_A
P_{\rm plant}\,P_{\rm filt}\,P_{\rm crypto}\,P_{\rm bus}\,P_{\rm ch}\,P_{\rm leak},
\]
with factors respectively corresponding to engine evolution, Kalman sensor fusion, cryptographic transcript update, bus scheduling, radar-aware channel uncertainty, and leakage observations.
\end{definition}

\begin{definition}[Radar-aware adversarial channel capacity]\label{def:radar-capacity}
For channel regime $\rho_k$, radar cross-section uncertainty variable $\Sigma_k$, Doppler shift $\nu_k$, attenuation $A_D(\nu_k)$, receiver noise spectral density $N_0$, and bandwidth $B_{\rm ch}$, define
\[
C_{\A,k}=B_{\rm ch}\log_2\left(1+
\frac{P_{\A,k}G_k\exp(-A_D(\nu_k))}{N_0B_{\rm ch}+\chi_\Sigma\operatorname{Var}(\Sigma_k)}
\right).
\]
The channel entropy degradation is
\[
\Delta H_{{\rm ch},k}=\log_2|\Key|-\Hsmooth(K_k\mid Z_k,\rho_k),
\]
where $Z_k$ includes adversarial observations of public traffic and leakage.
\end{definition}

\begin{definition}[Cryptographic actuation window]\label{def:actuation-window}
The cryptographic actuation window at epoch $k$ is
\[
W_{{\rm act},k}=\min\left\{
D_{\rm ctrl}-R_k,\,
\frac{\dot N_{\max}-\abs{\dot N_{H,k}}}{L_{\dot N}},\,
\frac{w_{f,\max}-\abs{w_{f,k}}}{L_w},\,
\frac{M_{s,k}}{L_s}
\right\},
\]
where $D_{\rm ctrl}$ is the control deadline, $\dot N_{\max}$ is the spool acceleration limit, $L_{\dot N},L_w,L_s$ are local Lipschitz constants mapping delay to spool acceleration, fuel-flow saturation, and surge-margin erosion.
\end{definition}

\paragraph{Compressor-map consistency envelope.}
If the local surge boundary also admits the perturbation model
\[
\Delta \pi_{{\rm surge},k}=s_N\Delta N_H(k)+s_m\Delta \dot m_c(k)+s_u\Delta w_f(k)+\epsilon_{s,k},
\]
then the instantaneous surge-distance perturbation becomes
\[
\Delta D_{{\rm surge},k}
=
(s_N-a_{\pi N})\Delta N_H(k)
+(s_m-a_{\pi m})\Delta \dot m_c(k)
+s_u\Delta w_f(k)
+\epsilon_{s,k}-\epsilon_{\pi,k}.
\]
This identity is useful because authenticated delay alters $\Delta w_f$ and $\Delta \dot m_c$ before it appears as a cryptographic timing quantity. The same local map therefore mediates both compressor operating-line erosion and the admissible window for authenticated release.

\subsection*{Inference chain}
The core feasibility condition is
\[
R_k+\Delta_T+\Delta_{\rm ver}(\lambda)+h_k
<
\min\left\{
D_{\rm ctrl},\
\frac{\dot N_{\max}-\abs{\dot N_{H,k}}}{L_{\dot N}},\
\frac{M_{s,k}}{L_s},\
\frac{\Gamma_s}{L_\Gamma}
\right\},
\]
where $\Delta_{\rm ver}(\lambda)$ is the cryptographic verification time and $\Gamma_s/L_\Gamma$ expresses the torsional-compliance-induced upper bound on authenticated sampling interval.

\subsection*{Boundary conditions}
When $M_{s,k}\le 0$, the local compressor-map linearization is outside the safe analytic set. When $W_{{\rm act},k}\le 0$, actuation is not certified by this co-design model even if individual cryptographic verifications return accept.

\subsection*{Technical comment}
The same variable $R_k$ is both a real-time scheduling object and a security object: if ciphertext expansion or attestation length raises $R_k$ above the actuation window, the system is cryptographically authenticated but not safely composable with the plant.

\section{Mathematical and Physical Preliminaries}

\subsection*{Notation and symbol definitions}
For random variables $X,Z$, let $H_\infty(X)=-\log \max_x\Prob[X=x]$, and let $\Hsmooth(X\mid Z)$ be conditional smooth min-entropy. Let $I(X;Z)$ denote mutual information. Let $D_\alpha(P\Vert Q)$ be R\'enyi divergence of order $\alpha>1$. Let $\mathcal H$ be a universal hash family. Let $S_k$ be the Kalman innovation covariance and $r_k=y_k-H\hat x_{k|k-1}$ the innovation. Let
\[
\|G\|_\infty=\sup_{\omega\in\R}\bar\sigma(G(j\omega))
\]
be the $H_\infty$ norm of a stable transfer matrix.

\subsection*{Assumption set}
\begin{assumption}[Entropy and leakage regularity]\label{asm:entropy-regularity}
There exist constants $\mu_{\rm puf},\ell_{\rm side},\ell_{\rm vib},\sigma_N,\sigma_v>0$ such that
\[
\Hsmooth(S_{\rm puf}\mid Z_{\rm enroll})\ge \mu_{\rm puf},
\]
side-channel leakage reduces conditional min-entropy by at most $\ell_{\rm side}$ bits per attestation, vibration telemetry leakage by at most $\ell_{\rm vib}\abs{\delta_{\rm tc}}$ bits per sample, and shaft-speed telemetry noise satisfies $\xi^N_k\sim\mathcal N(0,\sigma_N^2)$ within the analytic envelope.
\end{assumption}

\subsection*{Formal model}
\begin{definition}[LWE and SIS assumptions used in the construction]\label{def:lwe-sis}
Let $q=q(\lambda)$, $n=n(\lambda)$, $m=m(\lambda)$, and let $\chi$ be an error distribution over $\Z_q$. The decisional $\LWE_{n,q,\chi}$ assumption states that
\[
(A,A s+e)\approx_c (A,u),
\]
where $A\leftarrow \Z_q^{m\times n}$, $s\leftarrow \Z_q^n$, $e\leftarrow\chi^m$, and $u\leftarrow\Z_q^m$. The $\SIS_{n,m,q,\beta}$ assumption states that finding nonzero $z\in\Z^m$ with $Az=0\bmod q$ and $\norm{z}\le\beta$ is infeasible for probabilistic polynomial-time algorithms under the selected parameter family.
\end{definition}

\begin{definition}[Composable safety-security realization]\label{def:css-realization}
A protocol $\Pi$ realizes the ideal co-design functionality $\mathcal F_{\rm css}$ with error $\epsilon_{\rm css}$ if for every admissible adversary $\A$ there exists a simulator $\Sim$ such that for every environment $\Zed$,
\[
\abs{
\Prob[\Zed(\Real_{\Pi,\A})=1]-
\Prob[\Zed(\Ideal_{\mathcal F_{\rm css},\Sim})=1]
}\le \epsilon_{\rm css},
\]
where $\mathcal F_{\rm css}$ outputs authenticated telemetry, key-refresh decisions, and admissible control releases only when the schedulability and stability predicates are satisfied.
\end{definition}

\begin{definition}[Latency-induced stability margin]\label{def:latency-margin}
For closed-loop matrix $A_c$ and delay perturbation $\delta\ge 0$, define
\[
\mu_{\rm lat}(\delta)=
\alpha_0-\alpha_1\delta-\alpha_2\max\{0,\abs{w_f}-w_{f,\rm lin}\},
\]
where $\alpha_0>0$ is the nominal exponential decay rate, $\alpha_1>0$ converts delay to decay-rate loss, and $\alpha_2>0$ converts fuel-flow actuator saturation to loss of linear stability margin.
\end{definition}

\subsection*{Inference chain}
Leftover hashing gives key extraction from PUF and channel entropy; Kalman residual theory gives integrity thresholds; $H_\infty$ bounds disturbance amplification; response-time analysis gives $R_i$; small-gain arguments compose plant and cryptographic delay. These are connected by using the same epoch index $k$, filtration $\mathcal F_k$, and admissibility predicate
\[
\Psi_k=\mathbf 1\{R_k+\Delta_T+\Delta_{\rm ver}(\lambda)<W_{{\rm act},k},\
\norm{r_k}_{S_k^{-1}}\le \eta_k,\
\Hsmooth(K_k\mid Z_k)\ge \kappa_{\min}\}.
\]

\subsection*{Boundary conditions}
Min-entropy statements require a specified side-information variable. Kalman residual statements require covariance regularity. $H_\infty$ statements require internal stability of the transfer function. Response-time statements require bounded release jitter and declared message sets.

\subsection*{Technical comment}
The notation intentionally uses one filtration and one epoch index so that secrecy, leakage, integrity, and stability are not proved for incompatible time bases.

\section{Construction}

\subsection*{Notation and symbol definitions}
Let $\Pi_{\rm CSS}$ denote the proposed co-design protocol. At epoch $k$, it uses:
\[
\begin{aligned}
(pk,sk)&\leftarrow\KEM.\Gen(1^\lambda),\\
(c_k,K_k^{\rm kem})&\leftarrow\KEM.\Encap(pk),\\
K_k&=\mathsf{KDF}(K_k^{\rm kem}\Vert H_k^{\rm puf}\Vert H_k^{\rm ch}),\\
\sigma_k&=\Tag_{K_k}(y_k\Vert r_k\Vert \rho_k\Vert \phi_k\Vert k),\\
C_k&=\AEAD.\Enc_{K_k}(N_k^{\rm nonce},m_k,\mathsf{ad}_k),
\end{aligned}
\]
where $H_k^{\rm puf}$ is extracted PUF entropy, $H_k^{\rm ch}$ is channel-derived entropy contribution after radar-aware degradation, and $\mathsf{ad}_k$ contains mode, epoch, bus, and safety predicate metadata.

\subsection*{Assumption set}
\begin{assumption}[Nonce and transcript discipline]\label{asm:nonce-discipline}
For each key $K_k$, authenticated-encryption nonces are unique, attestation transcript challenges are bound to epoch $k$, and telemetry tags authenticate the sensor-fusion residual, not only the raw sensor vector.
\end{assumption}

\begin{assumption}[Zero-knowledge avionics authentication]\label{asm:zk-auth}
The protocol $\Pi_{\rm zk}$ proves possession of valid enrolled platform credentials and current PUF-derived commitments without revealing the underlying secret. Its soundness and zero-knowledge errors are $\epsilon_{\rm snd}^{\rm zk}(\lambda)$ and $\epsilon_{\rm zk}(\lambda)$.
\end{assumption}

\subsection*{Formal model}
The construction consists of five transformations.

\paragraph{Key refresh.}
Define radar-aware renewal period
\[
T_{{\rm key},k}=
\frac{\kappa_{\rm target}-\kappa_{\rm min}}
{\dot \ell_{\rm side}+\dot \ell_{\rm vib}\abs{\delta_{\rm tc,k}}+\dot \ell_{\rm ch}(\rho_k)}
\]
whenever the denominator is positive, and set $T_{{\rm key},k}=T_{\max}$ otherwise. The channel leakage rate is
\[
\dot \ell_{\rm ch}(\rho_k)=
\zeta_0+\zeta_\Sigma \operatorname{Var}(\Sigma_k)+\zeta_D A_D(\nu_k),
\]
so increasing radar reflection uncertainty contracts the admissible renewal period.

\paragraph{Spool-synchronous renewal refinement.}
If renewal decisions are phase-locked to the high-pressure spool, let $f_{H,k}>0$ denote the measured high-pressure spool rotational frequency and let $e_{\max}\in\N$ be the maximum admissible number of spool revolutions per session key. Then the mechanically synchronized renewal clock is
\[
T_{{\rm sync},k}=\frac{e_{\max}}{f_{H,k}},
\]
and the enforced refresh horizon becomes
\[
\widehat T_{{\rm key},k}=\min\{T_{{\rm key},k},T_{{\rm sync},k}\}.
\]
When no spool synchronization is imposed, set $T_{{\rm sync},k}=+\infty$ so that $\widehat T_{{\rm key},k}=T_{{\rm key},k}$. This refinement does not treat spool speed as a secret; it only states that an entropy-based renewal budget may be further tightened by the physically meaningful cadence at which authenticated engine epochs are synchronized.

\paragraph{Integrity tagging.}
The tag covers $y_k$, the innovation residual $r_k$, the Kalman covariance identifier, and the physical stress regime $\phi_k$:
\[
\sigma_k=\MAC_{K_k}\bigl(y_k\Vert r_k\Vert \operatorname{enc}(S_k)\Vert \operatorname{enc}(\phi_k)\Vert k\bigr).
\]
Shaft-speed telemetry noise enters the false-rejection probability because $\sigma_k$ is verified before residual gating, while the residual gate uses noisy $N_L,N_H$ channels.

\paragraph{Control release.}
A command is released only if
\[
\Ver_{K_k}(y_k,\sigma_k)=1,\quad
\norm{r_k}_{S_k^{-1}}\le \eta_k,\quad
R_k+\Delta_{\rm ver}(\lambda)+\Delta_T<W_{{\rm act},k}.
\]

\paragraph{Bus schedulability.}
For fixed-priority bus abstraction, the worst-case response time satisfies the recurrence
\[
R_i=C_i+B_i+\sum_{j\in hp(i)}
\left\lceil\frac{R_i+J_j}{P_j}\right\rceil C_j,
\]
where $B_i$ is blocking and $hp(i)$ contains higher-priority traffic. For a KEM ciphertext of length $L_{\rm kem}(\lambda)$, zero-knowledge transcript length $L_{\rm zk}(\lambda)$, tag length $L_{\rm tag}$, and telemetry length $L_{\rm tel}$,
\[
C_i(\lambda)=\frac{L_{\rm kem}(\lambda)+L_{\rm zk}(\lambda)+L_{\rm tag}+L_{\rm tel}+L_{\rm proto}}{B_{\rm bus}}+C_i^{\rm proto}.
\]

\paragraph{Transcript and verification load.}
The per-epoch authenticated payload budget may be written more explicitly as
\[
L_{{\rm epoch},i}(\lambda)=L_{\rm kem}(\lambda)+L_{\rm zk}(\lambda)+L_{\rm tag}+L_{\rm tel}+L_{\rm meta}+L_{\rm proto},
\]
while the verification delay decomposes as
\[
\Delta_{\rm ver}(\lambda)=\Delta_{\rm kem}^{\rm dec}(\lambda)+\Delta_{\rm zk}^{\rm ver}(\lambda)+\Delta_{\rm mac}^{\rm ver}+\Delta_{\rm filt}.
\]
Hence the total authenticated release latency for safety-critical traffic is
\[
\delta_{{\rm tot},i}=R_i+\Delta_{\rm ver}(\lambda)+\Delta_T,
\]
and even before interference is unfolded through the fixed-point recurrence, the serialization prerequisite alone gives the lower envelope
\[
\delta_{{\rm tot},i}\ge \frac{L_{{\rm epoch},i}(\lambda)}{B_{\rm bus}}+\Delta_{\rm ver}(\lambda)+\Delta_T.
\]
This separation between bit budget and computation budget is useful when a transcript is cryptographically acceptable but only marginally admissible with respect to the actuation window.

\paragraph{Stress-regime gate.}
The symbolic stress regime is
\[
\phi_k=(M_{s,k},T_{t4,k},e_{{\rm EGT},k},\delta_{{\rm tc},k},\Gamma_s,\dot N_{H,k},w_{f,k},\norm{v_k}),
\]
and the certified gate is
\[
\phi_k\in\Phi_{\rm cert}
\Longleftrightarrow
\begin{cases}
M_{s,k}\ge M_{\min},\\
T_{t4,k}\le T_{t4,\max},\\
\abs{e_{{\rm EGT},k}}\le e_{\rm EGT,\max},\\
\abs{\dot N_{H,k}}\le \dot N_{\max},\\
\abs{w_{f,k}}\le w_{f,\max},\\
\norm{v_k}\le v_{\max}.
\end{cases}
\]

\subsection*{Inference chain}
The key transition is
\[
(\rho_k,\phi_k)\mapsto
(\widehat T_{{\rm key},k},\sigma_k,R_k,W_{{\rm act},k},\eta_k)
\mapsto
(\epsilon_{\rm css},\epsilon_{\rm int},\mu_{\rm lat},\epsilon_{\rm rt}),
\]
so radar, Doppler, torsion, surge, blade-tip clearance, spool acceleration, spool-synchronous refresh, and fuel saturation have explicit security or timing consequences.

\subsection*{Boundary conditions}
The construction refuses to release commands when the cryptographic layer is authentic but untimely. It also refuses to treat small residuals as sufficient when tag verification fails. Therefore the construction is conjunctive rather than compensatory.

\subsection*{Technical comment}
The construction deliberately avoids a hierarchy in which cryptography is merely appended to a controller. It places authentication, entropy, residual integrity, and response-time feasibility in the same release predicate.

\section{Composable Security Analysis}

\subsection*{Notation and symbol definitions}
Let $\epsilon_{\rm kem}$ be the KEM indistinguishability error, $\epsilon_{\rm aead}$ the authenticated-encryption error, $\epsilon_{\rm tag}$ the tag-forgery error, $\epsilon_{\rm zk}$ the zero-knowledge error, $\epsilon_{\rm puf}$ the PUF extraction error, $\epsilon_{\rm bus}$ the probability that a cryptographic message misses its deadline, and $\epsilon_{\rm st}$ the probability that the latency margin is nonpositive.

\subsection*{Assumption set}
The following assumptions remain active in this section: lattice hardness, nonce discipline, PUF entropy regularity, bounded leakage, residual covariance regularity, and abstract bus response-time recurrence. Each theorem lists the subset it uses.

\subsection*{Formal model and inference chain}

\begin{lemma}[Leftover extraction under PUF and channel side information]\label{lem:leftover}
Let $S_{\rm puf}$ be the PUF secret, $Z$ side information, and $\mathcal H$ a universal hash family. Suppose
\[
\Hsmooth(S_{\rm puf}\mid Z)\ge \mu_{\rm puf}-\ell_{\rm side}-\ell_{\rm vib}\abs{\delta_{\rm tc}}-\Delta H_{\rm ch}.
\]
If $K=\mathsf{Ext}_h(S_{\rm puf})$ has length $\kappa$, then
\[
\Delta_{\rm stat}\bigl((h,K,Z),(h,U_\kappa,Z)\bigr)
\le
\varepsilon+\frac12 2^{-(\mu_{\rm puf}-\ell_{\rm side}-\ell_{\rm vib}\abs{\delta_{\rm tc}}-\Delta H_{\rm ch}-\kappa)/2}.
\]
\end{lemma}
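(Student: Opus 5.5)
The plan is to reduce the statement to the standard smooth-min-entropy form of the leftover hash lemma \cite{ImpagliazzoLevinLuby1989,Renner2005Thesis}. Write
\[
m:=\mu_{\rm puf}-\ell_{\rm side}-\ell_{\rm vib}\abs{\delta_{\rm tc}}-\Delta H_{\rm ch}
\]
for the hypothesized lower bound, so that $\Hsmooth(S_{\rm puf}\mid Z)\ge m$. The claimed bound then reads $\varepsilon+\tfrac12 2^{-(m-\kappa)/2}$. The physical quantities enter only through $m$, so the whole proof is a purely information-theoretic statement about one random variable with side information.

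\textbf{Step 1 (remove the smoothing).} By the definition of smooth min-entropy, there is a sub-normalized distribution $P'_{S Z}$ within distance $\varepsilon$ of $P_{S_{\rm puf} Z}$ with
\[
H_\infty(S'\mid Z')\ge m.
\]
Here $H_\infty$ is read as the average (guessing-probability) conditional min-entropy, and the distance is taken in the purified/trace distance used by Renner, which dominates statistical distance. Applying the same deterministic map $(s,z)\mapsto(h,\mathsf{Ext}_h(s),z)$, with $h$ independent and uniform, to both distributions can only shrink statistical distance. Hence
\[
\Delta_{\rm stat}\bigl((h,K,Z),(h,K',Z')\bigr)\le\varepsilon ,
\]
and the uniform reference term moves by at most the same amount in its $Z$-marginal. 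A short triangle-inequality bookkeeping is needed to make sure only one $\varepsilon$ survives, not two. I expect this to work by comparing $(h,K,Z)$ with $(h,U_\kappa,Z)$ directly and absorbing the marginal shift into the smoothing, as in Renner's thesis, Cor.~5.6.1.

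\textbf{Step 2 (collision-probability bound).} For the non-smooth $P'$, I would use the standard chain. For each $z$, universality of $\mathcal H$ gives
\[
\mathrm{Col}(h,\mathsf{Ext}_h(S')\mid Z'=z)\le \frac{1}{|\mathcal H|}\Bigl(2^{-\kappa}+\mathrm{Col}(S'\mid z)\Bigr).
\]
Cauchy--Schwarz converts collision probability into statistical distance from uniform on the $|\mathcal H|2^\kappa$-point space. This yields, for each $z$, the distance $\tfrac12\sqrt{2^\kappa\,\mathrm{Col}(S'\mid z)}$. Averaging over $z$ and applying Jensen's inequality to the square root gives
\[
\tfrac12\sqrt{2^\kappa\,\E_z\mathrm{Col}(S'\mid z)}.
\]
Finally, $\E_z\mathrm{Col}(S'\mid z)\le \E_z\max_s P'(s\mid z)=2^{-H_\infty(S'\mid Z')}\le 2^{-m}$, which produces the term $\tfrac12 2^{-(m-\kappa)/2}$. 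Sub-normalization only decreases the collision mass, so it is harmless.

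\textbf{Step 3 (combine).} The triangle inequality gives $\varepsilon+\tfrac12 2^{-(m-\kappa)/2}$. Substituting back the definition of $m$ yields the displayed bound.

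\textbf{Main obstacle.} The delicate point is Step 1. The statement uses smoothing with respect to the joint $(S_{\rm puf},Z)$, while the target distance compares against $(h,U_\kappa,Z)$ with the \emph{true} $Z$-marginal. A naive argument gives $2\varepsilon$. I would first try the formulation in which the smoothing ball is taken over distributions with the same $Z$-marginal, or trace-distance smoothing for which the cited lemma directly yields a single $\varepsilon$. Failing that, I would interpret $\varepsilon$ in the statement as absorbing this constant, noting the adjustment explicitly.
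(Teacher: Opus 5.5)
Your Steps 2 and 3 are fine (the collision-probability argument also works for sub-normalized distributions), but Step 1 is left open. That step is where the content of the lemma lies. As written, your argument proves $2\varepsilon+\tfrac12\,2^{-(m-\kappa)/2}$, not the stated bound. None of your three fallbacks closes the gap:
\begin{itemize}
\item Restricting the smoothing ball to distributions with the true $Z$-marginal strengthens the hypothesis.
\item Absorbing the factor $2$ into $\varepsilon$ weakens the conclusion.
\item Renner's smooth leftover hash lemma (Cor.~5.6.1) is itself stated with a $2\varepsilon$ term, since in the quantum setting there is no pointwise minimum to exploit. So it cannot be cited for the single-$\varepsilon$ form.
\end{itemize}

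The missing idea is to make the smoothing one-sided, which is exactly how the paper's proof reads the hypothesis. It takes an event $\mathcal E$ with $\Prob[\mathcal E]\ge 1-\varepsilon$ on which the conditional min-entropy of $S_{\rm puf}$ given $Z$ is at least $m$. It then applies the leftover hash lemma on $\mathcal E$ and charges at most $\Prob[\bar{\mathcal E}]\le\varepsilon$ for the complement.

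Classically you can always get there from your witness $P'$. Let $P=P_{S_{\rm puf}Z}$ and set $P''(s,z)=\min\{P(s,z),P'(s,z)\}$.
\begin{itemize}
\item Since $P''\le P'$, it inherits $\sum_z\max_s P''(s,z)\le 2^{-m}$.
\item Since $P''\le P$, it equals $\Prob[S_{\rm puf}=s,Z=z,\mathcal E]$ for an event $\mathcal E$ realized with an auxiliary coin.
\item Its deficit is $\delta=1-\sum P''=\sum(P-P')_+$. This is at most $\varepsilon$ for statistical-distance smoothing among normalized distributions, and also for generalized trace or purified distance among sub-normalized ones.
\end{itemize}

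Now redo your triangle inequality against $P''_{hKZ}$ and the reference $P_h\times U_\kappa\times P''_Z$:
\[
\Delta_{\rm stat}\bigl((h,K,Z),(h,U_\kappa,Z)\bigr)
\le \tfrac12\delta+\tfrac12\,2^{-(m-\kappa)/2}+\tfrac12\delta
\le \varepsilon+\tfrac12\,2^{-(m-\kappa)/2}.
\]
Domination $P''\le P$ is what saves the factor of two. It makes the real-side correction $\tfrac12\norm{P_{hKZ}-P''_{hKZ}}_1$ and the reference-side correction $\tfrac12\norm{P_Z-P''_Z}_1$ each exactly $\tfrac12\delta$, rather than up to $\varepsilon$ each.

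Equivalently, in the paper's form: $(h,U_\kappa)$ is independent of $(S_{\rm puf},Z,\mathcal E)$, so both distributions split along the same event. Joint convexity of statistical distance then gives at most
\[
\Prob[\mathcal E]\cdot\tfrac12\sqrt{2^{\kappa-m}/\Prob[\mathcal E]}+\Prob[\bar{\mathcal E}],
\]
which is at most the claimed bound.
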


\begin{proof}
The conditional smooth min-entropy assumption gives an event $\mathcal E$ with probability at least $1-\varepsilon$ on which the non-smooth conditional min-entropy of $S_{\rm puf}$ given $Z$ is at least
\[
m_0=\mu_{\rm puf}-\ell_{\rm side}-\ell_{\rm vib}\abs{\delta_{\rm tc}}-\Delta H_{\rm ch}.
\]
For universal hashing, the leftover hash lemma gives statistical distance at most $\frac12 2^{-(m_0-\kappa)/2}$ on $\mathcal E$. Outside $\mathcal E$, the statistical distance contribution is at most $\varepsilon$ because total variation distance is at most one. Adding the two contributions yields the asserted bound. The assumption on PUF entropy supplies $\mu_{\rm puf}$, the side-channel and vibration assumptions supply $\ell_{\rm side}$ and $\ell_{\rm vib}\abs{\delta_{\rm tc}}$, and radar-aware channel degradation supplies $\Delta H_{\rm ch}$; jointly these determine the secrecy conclusion.
\end{proof}

\begin{lemma}[Doppler attenuation and adversarial channel capacity]\label{lem:doppler-capacity}
For $A_D(\nu)\ge 0$, adversarial transmit budget $P_{\A}$, gain $G$, radar uncertainty variance $V_\Sigma$, and $\chi_\Sigma>0$,
\[
C_{\A}=B_{\rm ch}\log_2\left(1+
\frac{P_{\A}G e^{-A_D(\nu)}}{N_0B_{\rm ch}+\chi_\Sigma V_\Sigma}
\right)
\]
is nonincreasing in $A_D(\nu)$ and $V_\Sigma$. Moreover,
\[
\frac{\partial C_{\A}}{\partial A_D}
=
-\frac{B_{\rm ch}}{\ln 2}
\frac{S}{1+S},
\quad
S=\frac{P_{\A}G e^{-A_D}}{N_0B_{\rm ch}+\chi_\Sigma V_\Sigma}.
\]
\end{lemma}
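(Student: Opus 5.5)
The plan is a direct calculus argument that treats $C_{\A}$ as a composition of elementary monotone maps. Write $D=N_0B_{\rm ch}+\chi_\Sigma V_\Sigma$, which is strictly positive because $N_0B_{\rm ch}>0$, $\chi_\Sigma>0$ and $V_\Sigma\ge 0$ (a variance). Set $S=P_{\A}Ge^{-A_D}/D$; since $P_{\A},G\ge 0$, we have $S\ge 0$. Then $C_{\A}=B_{\rm ch}\log_2(1+S)=\frac{B_{\rm ch}}{\ln 2}\ln(1+S)$, and $t\mapsto \ln(1+t)$ is strictly increasing on $[0,\infty)$. Monotonicity of $C_{\A}$ in each argument therefore reduces to monotonicity of $S$.

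First, I would establish monotonicity without derivatives, so that no smoothness of $A_D(\nu)$ in $\nu$ is needed; the statement concerns $C_{\A}$ as a function of the value $A_D$. Two observations suffice:
\begin{itemize}
\item $A_D\mapsto e^{-A_D}$ is decreasing, and it is multiplied by the nonnegative constant $P_{\A}G/D$, so $S$ is nonincreasing in $A_D$.
\item $V_\Sigma\mapsto D$ is increasing because $\chi_\Sigma>0$, so $1/D$ is decreasing, and $S$ is nonincreasing in $V_\Sigma$.
\end{itemize}
Composing with the increasing map $t\mapsto B_{\rm ch}\log_2(1+t)$, with $B_{\rm ch}>0$, yields the nonincreasing claims.

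Second, I would compute the derivative by the chain rule:
\[
\frac{\partial C_{\A}}{\partial A_D}=\frac{B_{\rm ch}}{\ln 2}\,\frac{1}{1+S}\,\frac{\partial S}{\partial A_D}.
\]
Since $\partial S/\partial A_D=-S$ (differentiating $e^{-A_D}$ returns $-e^{-A_D}$), this gives exactly $-\frac{B_{\rm ch}}{\ln 2}\frac{S}{1+S}$, as stated. As a byproduct, the same computation gives
\[
\frac{\partial C_{\A}}{\partial V_\Sigma}=-\frac{B_{\rm ch}}{\ln 2}\,\frac{S}{1+S}\,\frac{\chi_\Sigma}{D}\le 0,
\]
which is consistent with the first step.

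There is no real obstacle here. The only point that needs care is the standing sign conventions: $B_{\rm ch}>0$, $P_{\A}G\ge 0$ and $D>0$, which I would state explicitly, noting that $N_0B_{\rm ch}>0$ keeps $D$ away from zero. I would also remark that when $P_{\A}G=0$ both sides of the derivative identity vanish, so the identity holds trivially in that case.
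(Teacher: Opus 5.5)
Your proposal is correct and essentially matches the paper's proof: both write $C_{\A}=B_{\rm ch}\log_2(1+S)$ with $D=N_0B_{\rm ch}+\chi_\Sigma V_\Sigma>0$, use $\partial S/\partial A_D=-S$ and $\partial S/\partial V_\Sigma=-\chi_\Sigma S/D$, and apply the chain rule to obtain both partials, each $\le 0$. Your derivative-free monotonicity step and your explicit sign conventions ($B_{\rm ch}>0$, $P_{\A}G\ge 0$, $D>0$) are a minor refinement of assumptions the paper leaves implicit.
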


\begin{proof}
The signal-to-noise-and-uncertainty ratio is $S=P_{\A}G e^{-A_D}/D$ with $D=N_0B_{\rm ch}+\chi_\Sigma V_\Sigma>0$. Differentiating,
\[
\frac{\partial S}{\partial A_D}=-S,\qquad
\frac{\partial S}{\partial V_\Sigma}=-\frac{\chi_\Sigma S}{D}.
\]
Since $C_{\A}=B_{\rm ch}\log_2(1+S)$,
\[
\frac{\partial C_{\A}}{\partial A_D}
=\frac{B_{\rm ch}}{\ln 2}\frac{1}{1+S}(-S)
=-\frac{B_{\rm ch}}{\ln 2}\frac{S}{1+S}\le 0,
\]
and
\[
\frac{\partial C_{\A}}{\partial V_\Sigma}
=-\frac{B_{\rm ch}}{\ln 2}\frac{\chi_\Sigma S}{D(1+S)}\le 0.
\]
Thus Doppler-dependent attenuation and radar-cross-section-aware uncertainty reduce the adversarial channel-capacity bound in this model. The attenuation and uncertainty assumptions produce the channel-capacity conclusion.
\end{proof}

\begin{lemma}[Radar uncertainty contracts the key renewal period]\label{lem:key-renewal-radar}
Let
\[
T_{\rm key}(V_\Sigma)=
\frac{\kappa_{\rm target}-\kappa_{\min}}
{\zeta_0+\zeta_\Sigma V_\Sigma+\zeta_D A_D+\dot\ell_{\rm side}+\dot\ell_{\rm vib}\abs{\delta_{\rm tc}}}
\]
with positive numerator and denominator. If $\zeta_\Sigma>0$, then
\[
\frac{\partial T_{\rm key}}{\partial V_\Sigma}
=
-\frac{(\kappa_{\rm target}-\kappa_{\min})\zeta_\Sigma}
{(\zeta_0+\zeta_\Sigma V_\Sigma+\zeta_D A_D+\dot\ell_{\rm side}+\dot\ell_{\rm vib}\abs{\delta_{\rm tc}})^2}<0.
\]
\end{lemma}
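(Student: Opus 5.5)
The plan is a direct calculus computation, in the same style as the proof of \cref{lem:doppler-capacity}. Write $N=\kappa_{\rm target}-\kappa_{\min}>0$ for the numerator and
\[
D(V_\Sigma)=\zeta_0+\zeta_\Sigma V_\Sigma+\zeta_D A_D+\dot\ell_{\rm side}+\dot\ell_{\rm vib}\abs{\delta_{\rm tc}}
\]
for the denominator. By hypothesis $D>0$, so $T_{\rm key}=N/D$ is a quotient of a constant by a positive affine function of $V_\Sigma$. It is therefore smooth on the open set where $D>0$, and no saturation branch ($T_{\max}$) needs to be considered. The variables $A_D$, $\delta_{\rm tc}$, $\dot\ell_{\rm side}$, $\dot\ell_{\rm vib}$ and $\zeta_0$ are treated as fixed parameters, independent of $V_\Sigma$. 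This matches how the channel leakage rate $\dot\ell_{\rm ch}(\rho_k)$ was introduced in the key-refresh construction.

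The steps are as follows. First compute $\partial D/\partial V_\Sigma=\zeta_\Sigma$, since $D$ is affine in $V_\Sigma$. Then apply the quotient (or chain) rule to $N D^{-1}$:
\[
\frac{\partial T_{\rm key}}{\partial V_\Sigma}=-\frac{N}{D^2}\,\frac{\partial D}{\partial V_\Sigma}=-\frac{N\zeta_\Sigma}{D^2}.
\]
Substituting back $N$ and $D$ gives exactly the displayed formula. For the strict sign, note that $N>0$ and $\zeta_\Sigma>0$ by hypothesis, and $D^2>0$ because $D>0$. Hence the derivative is strictly negative.

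Finally, interpret the result. Strict negativity of the derivative on the admissible interval of $V_\Sigma$ (an interval because $D$ is affine with positive slope) gives strict monotone decrease of $T_{\rm key}$. Thus radar-cross-section uncertainty contracts the renewal period. The same conclusion carries over to $\widehat T_{{\rm key},k}=\min\{T_{{\rm key},k},T_{{\rm sync},k}\}$ in the weak (nonincreasing) sense, since $T_{{\rm sync},k}$ does not depend on $V_\Sigma$.

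There is no genuine obstacle; the only point needing care is bookkeeping. One must state explicitly that $A_D$ and the other leakage rates are held fixed, so that $V_\Sigma$ enters $D$ only through the term $\zeta_\Sigma V_\Sigma$. One must also note that positivity of the denominator is assumed at the point of differentiation, which is what makes $T_{\rm key}$ differentiable there.
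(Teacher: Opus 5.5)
Your proof is correct and follows the paper's own proof: differentiate $(\kappa_{\rm target}-\kappa_{\min})/D(V_\Sigma)$, where $D$ is affine in $V_\Sigma$ with slope $\zeta_\Sigma$, and get the strict sign from a positive numerator, $\zeta_\Sigma>0$ and $D^2>0$. Your extra remarks are correct but not needed for the lemma: strict decrease holds on the half-line where $D>0$, and $\widehat T_{{\rm key},k}$ is weakly monotone in $V_\Sigma$.
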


\begin{proof}
The formula follows by differentiating $(\kappa_{\rm target}-\kappa_{\min})/D(V_\Sigma)$ with
\[
D(V_\Sigma)=\zeta_0+\zeta_\Sigma V_\Sigma+\zeta_D A_D+\dot\ell_{\rm side}+\dot\ell_{\rm vib}\abs{\delta_{\rm tc}}.
\]
Since the numerator is positive and $\zeta_\Sigma>0$, the derivative is strictly negative. The radar uncertainty assumption fixes $V_\Sigma$, the entropy-rate model fixes $D$, and the result gives the required key-renewal conclusion.
\end{proof}

\begin{lemma}[Shaft-speed noise and tag-verification false rejection]\label{lem:shaft-noise-fr}
Assume the verified message includes a quantized shaft-speed component $Q_\Delta(N+\xi^N)$ with Gaussian telemetry noise $\xi^N\sim\mathcal N(0,\sigma_N^2)$ and quantizer step $\Delta>0$. If the legitimate tag is computed after measurement and the verifier recomputes from independently filtered telemetry with noise $\tilde\xi^N\sim\mathcal N(0,\sigma_N^2)$, then the false-rejection probability caused by shaft-speed mismatch satisfies
\[
p_{\rm fr}^{N}\le
2\exp\left(-\frac{\Delta^2}{16\sigma_N^2}\right)
\]
under nearest-neighbor quantization with acceptance margin $\Delta/2$.
\end{lemma}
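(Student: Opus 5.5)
The plan is to reduce false rejection to a tail event for the difference of the two noise realizations. With nearest-neighbor quantization and acceptance margin $\Delta/2$, verification succeeds whenever the tagged value $N+\xi^N$ and the verifier's value $N+\tilde\xi^N$ fall within $\Delta/2$ of each other in the sense the quantizer respects. So I will first show the inclusion
\[
\{\text{false rejection}\}\subseteq\{\abs{\xi^N-\tilde\xi^N}>\Delta/2\}
\]
or, more conservatively, $\subseteq\{\abs{\xi^N}>\Delta/4\}\cup\{\abs{\tilde\xi^N}>\Delta/4\}$. The second form is the one that produces the stated constant, since each noise term then only has to stay within a quarter step.

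Using that union, each event is a Gaussian tail. The Chernoff bound gives $\Prob[\abs{\xi}>a]\le 2\exp(-a^2/(2\sigma_N^2))$, and with $a=\Delta/4$ the exponent is $\Delta^2/(32\sigma_N^2)$. That exponent is weaker than the claimed $\Delta^2/(16\sigma_N^2)$, so I would switch to the difference route. There $\xi^N-\tilde\xi^N\sim\mathcal N(0,2\sigma_N^2)$ by independence, and
\[
\Prob\bigl[\abs{\xi^N-\tilde\xi^N}>\Delta/2\bigr]\le 2\exp\!\left(-\frac{(\Delta/2)^2}{2\cdot 2\sigma_N^2}\right)=2\exp\!\left(-\frac{\Delta^2}{16\sigma_N^2}\right).
\]
This matches the statement exactly. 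The main line is therefore: establish independence and Gaussianity of the difference, then apply the sub-Gaussian tail bound $\Prob[\abs{G}>t]\le 2e^{-t^2/(2s^2)}$ for $G\sim\mathcal N(0,s^2)$.

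The main obstacle is the first inclusion. Nearest-neighbor quantization is not robust near cell boundaries: two values closer than $\Delta/2$ can still round to different cells. I will therefore read ``acceptance margin $\Delta/2$'' as the verifier's rule: the verifier accepts the tagged quantized value whenever its own unquantized reading lies within $\Delta/2$ of that quantized value, that is, it checks $\abs{Q_\Delta(N+\xi^N)-(N+\tilde\xi^N)}\le\Delta/2$. Under this reading, rejection requires the difference to exceed the margin once the quantization offset is absorbed. I would state this reading explicitly as the operational meaning of the margin, so that mismatch implies $\abs{\xi^N-\tilde\xi^N}>\Delta/2$.

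If this inclusion cannot be justified without losing the quantization offset, I would fall back on a randomized (dithered) quantizer, or state that the bound holds conditional on the true $N$ lying at a cell center. In either case the Gaussian tail computation above is unchanged.
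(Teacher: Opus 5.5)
\textbf{Same computation as the paper; the step you flagged as the obstacle does not go through.}

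Your tail computation is exactly the paper's. The paper's proof opens by asserting that a shaft-component false rejection occurs only if $\abs{(N+\xi^N)-(N+\tilde\xi^N)}>\Delta/2$. In effect it reads ``acceptance margin $\Delta/2$'' as a tolerance on the unquantized discrepancy, so the quantizer never enters. It then applies the $\mathcal N(0,2\sigma_N^2)$ tail bound with $a=\Delta/2$.

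The gap is in the step you yourself flagged: the reading you adopt does not deliver that inclusion. Because $Q_\Delta(N+\xi^N)$ is a cell center, the test $\abs{Q_\Delta(N+\xi^N)-(N+\tilde\xi^N)}\le\Delta/2$ says precisely that $N+\tilde\xi^N$ lies in the same closed quantization cell. It is therefore just requantize-and-compare, the boundary-sensitive rule you wanted to avoid. Write $e=Q_\Delta(N+\xi^N)-(N+\xi^N)$ with $\abs{e}\le\Delta/2$. A rejection then only yields $\abs{\xi^N-\tilde\xi^N}>\Delta/2-\abs{e}$, so the quantization offset is not absorbed; it can use up the whole margin. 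In fact, under your rule the lemma is false. If $N$ sits on a cell boundary, every pair $\xi^N,\tilde\xi^N$ of opposite signs is rejected, so $p_{\rm fr}^{N}\ge 1/2$ for every $\sigma_N$.

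Your fallbacks do not repair this as claimed.
\begin{itemize}
\item \emph{Dither.} With a shared uniformly random cell offset, two readings at distance $d$ are split by a boundary with probability $\min\{d/\Delta,1\}$. Hence $p_{\rm fr}^{N}=\E\min\{\abs{\xi^N-\tilde\xi^N}/\Delta,1\}$, which is of order $2\sigma_N/(\sqrt{\pi}\,\Delta)$ when $\sigma_N\ll\Delta$. That decay is only polynomial, so the exponential bound $2\exp(-\Delta^2/(16\sigma_N^2))$ fails for large $\Delta/\sigma_N$.
\item \emph{$N$ at a cell center.} The inclusion into $\{\abs{\xi^N-\tilde\xi^N}>\Delta/2\}$ still fails: take $\xi^N$ just below and $\tilde\xi^N$ just above $\Delta/2$. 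So the computation is not ``unchanged''. The bound does hold there, but via a different union. Rejection forces $\abs{\xi^N}\ge\Delta/2$ or $\abs{\tilde\xi^N}\ge\Delta/2$, giving $4\exp(-\Delta^2/(8\sigma_N^2))$. This is at most $2\exp(-\Delta^2/(16\sigma_N^2))$ once $\Delta^2\ge 16\sigma_N^2\ln 2$, and the claim is trivial otherwise.
\end{itemize}

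To obtain the lemma for arbitrary $N$, you must choose an acceptance rule for which the inclusion is true. One option is the paper's tolerance $\Delta/2$ on the raw readings. The other is a window of half-width $\Delta$ (not $\Delta/2$) around the tagged quantized value: then $\abs{Q_\Delta(N+\xi^N)-(N+\tilde\xi^N)}>\Delta$ implies $\abs{\xi^N-\tilde\xi^N}\ge\abs{Q_\Delta(N+\xi^N)-(N+\tilde\xi^N)}-\abs{e}>\Delta/2$. With either rule stated explicitly as the meaning of the margin, your difference-route computation finishes the proof exactly as the paper does.
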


\begin{proof}
A false rejection from the shaft component occurs only if
\[
\abs{(N+\xi^N)-(N+\tilde\xi^N)}>\Delta/2.
\]
The difference $\xi^N-\tilde\xi^N$ is Gaussian with mean zero and variance $2\sigma_N^2$. Hence
\[
p_{\rm fr}^{N}
\le
\Prob[\abs{\xi^N-\tilde\xi^N}>\Delta/2].
\]
For a zero-mean Gaussian $G$ with variance $2\sigma_N^2$, the standard tail bound gives
\[
\Prob[\abs{G}>a]\le 2\exp\left(-\frac{a^2}{2(2\sigma_N^2)}\right).
\]
Substituting $a=\Delta/2$ gives
\[
p_{\rm fr}^{N}\le 2\exp\left(-\frac{\Delta^2}{16\sigma_N^2}\right).
\]
The telemetry-noise assumption determines the exponent, and the tag-binding assumption links shaft-speed noise to authentication failure.
\end{proof}

\begin{lemma}[Kalman innovation threshold and integrity alarm probability]\label{lem:kalman-threshold}
Let $r_k\sim\mathcal N(0,S_k)$ under nominal authenticated telemetry and suppose $\lambda_{\min}(S_k)>0$. For threshold $\eta_k>0$ and dimension $d_y$,
\[
\Prob[\norm{r_k}_{S_k^{-1}}^2>\eta_k]\le
\inf_{0<\theta<1/2}
\exp(-\theta\eta_k)(1-2\theta)^{-d_y/2}.
\]
\end{lemma}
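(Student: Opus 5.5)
The bound is a Chernoff bound for a chi-square variable, so I would prove it by whitening the innovation and applying Markov's inequality to an exponential moment. Since $\lambda_{\min}(S_k)>0$, the matrix $S_k$ is symmetric positive definite and has a symmetric square root $S_k^{1/2}$. Define $g_k=S_k^{-1/2}r_k$. Then $g_k\sim\mathcal N(0,I_{d_y})$ and
\[
\norm{r_k}_{S_k^{-1}}^2=r_k\trans S_k^{-1}r_k=\norm{g_k}_2^2=\sum_{i=1}^{d_y}g_{k,i}^2,
\]
which is $\chi^2_{d_y}$-distributed with independent standard normal coordinates. This step is the only place the covariance-regularity hypothesis enters, and it removes all dependence on $S_k$.

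Next I would fix $\theta\in(0,1/2)$. Because $x\mapsto e^{\theta x}$ is increasing and positive, Markov's inequality gives
\[
\Prob\bigl[\norm{g_k}^2>\eta_k\bigr]\le e^{-\theta\eta_k}\,\E\,e^{\theta\norm{g_k}^2}.
\]
Independence of the coordinates factorizes the expectation into $\bigl(\E\,e^{\theta g^2}\bigr)^{d_y}$ with $g\sim\mathcal N(0,1)$. The one-dimensional moment generating function follows from the Gaussian integral
\[
\E\,e^{\theta g^2}=\frac{1}{\sqrt{2\pi}}\int_{\R}e^{-(1-2\theta)x^2/2}\dd x=(1-2\theta)^{-1/2}.
\]
This integral converges precisely when $\theta<1/2$, which is why the infimum in the statement is restricted to $0<\theta<1/2$.

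Combining these gives $\Prob[\norm{r_k}_{S_k^{-1}}^2>\eta_k]\le \exp(-\theta\eta_k)(1-2\theta)^{-d_y/2}$ for every admissible $\theta$. Since the left side does not depend on $\theta$, taking the infimum over $\theta$ yields the claim.

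No step is a real obstacle. The only points needing care are the whitening, which requires the positive-definiteness assumption and the Gaussian model of the innovation under nominal authenticated telemetry, and the convergence range of the MGF. If desired, one could remark that the optimizer is $\theta^\ast=\tfrac12(1-d_y/\eta_k)$ when $\eta_k>d_y$; this gives the explicit form $\exp\bigl(-\tfrac12(\eta_k-d_y)\bigr)(\eta_k/d_y)^{d_y/2}$, which is useful for alarm design but not needed for the lemma.
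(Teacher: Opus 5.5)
Your proposal is correct and follows essentially the same route as the paper: identify $\norm{r_k}_{S_k^{-1}}^2$ as $\chi^2_{d_y}$, apply Markov's inequality to $e^{\theta\norm{r_k}_{S_k^{-1}}^2}$, use the chi-square moment generating function $(1-2\theta)^{-d_y/2}$, and take the infimum over $0<\theta<1/2$. You additionally spell out the whitening step $g_k=S_k^{-1/2}r_k$ and compute the one-dimensional Gaussian integral, both of which the paper simply asserts. Your optimizer $\theta^\ast=\tfrac12(1-d_y/\eta_k)$ and the resulting closed form are also correct.
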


\begin{proof}
Under the stated nominal model, $\norm{r_k}_{S_k^{-1}}^2$ has a chi-square distribution with $d_y$ degrees of freedom. For any $0<\theta<1/2$, Markov's inequality gives
\[
\Prob[\norm{r_k}_{S_k^{-1}}^2>\eta_k]
=
\Prob[e^{\theta\norm{r_k}_{S_k^{-1}}^2}>e^{\theta\eta_k}]
\le
e^{-\theta\eta_k}\E[e^{\theta\norm{r_k}_{S_k^{-1}}^2}].
\]
The moment-generating function of a chi-square random variable with $d_y$ degrees of freedom is $(1-2\theta)^{-d_y/2}$ for $\theta<1/2$. Taking the infimum over admissible $\theta$ yields the claimed bound. The residual covariance assumption supplies $S_k$, and the threshold choice supplies the integrity alarm conclusion.
\end{proof}

\begin{remark}[Innovation second moment]
Under the same nominal Gaussian model, $\E[\norm{r_k}^2]=\Tr(S_k)$. If one writes $r_k=C e_{k\mid k-1}+v_k$ with prediction-error covariance $P_{k\mid k-1}=\E[e_{k\mid k-1}e_{k\mid k-1}^{\trans}]$ and measurement-noise covariance $R_k=\E[v_kv_k^{\trans}]$, then $\Tr(S_k)=\Tr(CP_{k\mid k-1}C^{\trans}+R_k)$. This identity complements \Cref{lem:kalman-threshold} by connecting the alarm threshold to an explicit second-moment quantity without changing the tail bound used later.
\end{remark}

\begin{lemma}[Post-quantum ciphertext expansion and response-time monotonicity]\label{lem:response-monotonicity}
Let task $i$ carry KEM ciphertext length $L_{\rm kem}(\lambda)$ and let $R_i$ be the least nonnegative fixed point of
\[
R_i=C_i+B_i+\sum_{j\in hp(i)}
\left\lceil\frac{R_i+J_j}{P_j}\right\rceil C_j.
\]
If all $C_j,B_i,J_j,P_j$ are fixed except $C_i=L_{\rm kem}(\lambda)/B_{\rm bus}+C_i^0$, then $R_i$ is nondecreasing in $L_{\rm kem}(\lambda)$.
\end{lemma}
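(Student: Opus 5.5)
The plan is to view the response-time recurrence as a fixed-point problem for a monotone operator and argue by comparison of least fixed points. Fix all parameters except $L_{\rm kem}$, and for $L\ge 0$ define the map
\[
\Phi_L(R)=\frac{L}{B_{\rm bus}}+C_i^0+B_i+\sum_{j\in hp(i)}\left\lceil\frac{R+J_j}{P_j}\right\rceil C_j,\qquad R\ge 0.
\]
Two monotonicity facts drive everything. First, $\Phi_L$ is nondecreasing in $R$, because each ceiling term is nondecreasing in $R$ and $C_j\ge 0$. Second, $\Phi_L(R)$ is nondecreasing in $L$ for every fixed $R$, because $B_{\rm bus}>0$.

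The key steps, in order, are as follows.

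\begin{enumerate}
\item Fix $L\le L'$. Write $R^\ast(L)$ for the least nonnegative fixed point of $\Phi_L$, which is assumed to exist (the schedulable case). The aim is to show $R^\ast(L)\le R^\ast(L')$.

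\item Use the standard Kleene-style iteration: set $R^{(0)}=0$ and $R^{(n+1)}=\Phi_L(R^{(n)})$. By induction this sequence is nondecreasing. It is also bounded above by any fixed point $\bar R$ of $\Phi_L$: if $R^{(n)}\le\bar R$, then $R^{(n+1)}=\Phi_L(R^{(n)})\le\Phi_L(\bar R)=\bar R$.

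\item Conclude that the least fixed point is the limit, or the eventual constant value, of this iteration.

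\item Run the same iteration for $L'$, producing $R'^{(n)}$. Induction gives $R^{(n)}\le R'^{(n)}$, since
\[
\Phi_L(R^{(n)})\le\Phi_L(R'^{(n)})\le\Phi_{L'}(R'^{(n)}).
\]
Passing to the limit yields $R^\ast(L)\le R^\ast(L')$.
\end{enumerate}

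A cleaner alternative to step 4 avoids the limit altogether. Note that $R^\ast(L')$ satisfies
\[
\Phi_L(R^\ast(L'))\le\Phi_{L'}(R^\ast(L'))=R^\ast(L'),
\]
so $R^\ast(L')$ is a pre-fixed point of $\Phi_L$. The upper-bound argument of step 2 applies verbatim to pre-fixed points, so every iterate for $L$ stays below $R^\ast(L')$, and hence so does $R^\ast(L)$. I will adopt this version.

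The main obstacle is justifying that the iteration actually reaches the least fixed point, since $\Phi_L$ is discontinuous because of the ceilings. I would handle this by observing that on the iteration path every increment is at least $\min\{C_j\}$ or zero, provided the $C_j$ are positive. So either the sequence becomes stationary after finitely many steps, at a fixed point, or it is unbounded. By step 2 it is bounded whenever any fixed point exists, so it stabilizes at a fixed point that lies below every fixed point, which makes it the least one.

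The unschedulable case, where no fixed point exists for $L'$, I would treat by the convention $R^\ast=+\infty$, under which the inequality is trivial. The case where a fixed point exists for $L'$ but not for $L$ cannot occur, by the pre-fixed-point argument above.
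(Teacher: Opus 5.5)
Your proof is correct and follows essentially the same route as the paper. The paper likewise uses that $F_C(r)$ is nondecreasing in both $r$ and $C_i$, and compares the fixed-point iterations for $C_i\le C_i'$ term by term, starting from $r_0=C+B_i$ rather than $0$. Your pre-fixed-point variant and your discreteness argument make explicit the convergence step the paper merely asserts: the iteration stabilizes at the least fixed point despite the ceiling discontinuities. One small correction: starting from $0$, the increments are $0$ or at least $\min_j C_j$ only from the second step on, which is harmless to the argument.
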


\begin{proof}
Let $L'\ge L$ and set $C_i'=L'/B_{\rm bus}+C_i^0\ge C_i=L/B_{\rm bus}+C_i^0$. Define
\[
F_C(r)=C+B_i+\sum_{j\in hp(i)}\left\lceil\frac{r+J_j}{P_j}\right\rceil C_j.
\]
For every $r$, $F_{C_i'}(r)=F_{C_i}(r)+(C_i'-C_i)\ge F_{C_i}(r)$. The fixed-point iteration from $r_0=C+B_i$ is monotone because $F_C$ is nondecreasing in $r$. Therefore every iterate for $C_i'$ is at least the corresponding iterate for $C_i$, and the least fixed point is nondecreasing. The ciphertext expansion assumption supplies $L_{\rm kem}$, and the bus recurrence supplies the response-time conclusion.
\end{proof}

\begin{lemma}[Turbine torque transfer function and safe cryptographic delay]\label{lem:torque-delay}
Let $\Delta u$ be the fuel-command perturbation induced by delayed authenticated actuation, and let turbine torque perturbation satisfy
\[
\Delta T_q(s)=G_T(s)e^{-s\Delta_T}\Delta u(s).
\]
If $\norm{G_T}_\infty\le g_T$ and $\norm{\Delta u}_{2,[0,T]}\le U_T$, then
\[
\norm{\Delta T_q}_{2,[0,T]}\le g_T U_T.
\]
If closed-loop torque disturbance tolerance is $Q_{\max}$, a sufficient delay-admissibility condition is $g_T U_T(\delta)\le Q_{\max}$ for $\delta=R_k+\Delta_{\rm ver}+\Delta_T$.
\end{lemma}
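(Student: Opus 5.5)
The plan is to treat the torque map as a linear time-invariant operator and combine two facts. Pure delay is an isometry on $L_2$, and the $H_\infty$ norm of a stable transfer function is the $L_2$-induced gain, as recalled in the preliminaries. Concretely, I would write the operator $\Delta u\mapsto\Delta T_q$ as the composition of the delay operator $D_{\Delta_T}$, where $(D_{\Delta_T}\Delta u)(t)=\Delta u(t-\Delta_T)$ with zero initial history, with the convolution operator defined by $G_T$. By Parseval, $\norm{e^{-j\omega\Delta_T}}=1$ for all $\omega$, so $\norm{G_Te^{-s\Delta_T}}_\infty=\norm{G_T}_\infty\le g_T$. Here I use the boundary condition that $G_T$ is internally stable, which is required for $H_\infty$ statements.

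The step I expect to be the main obstacle is passing from the infinite-horizon gain to the finite window $[0,T]$. The $H_\infty$ bound is naturally stated on $L_2[0,\infty)$, while the hypothesis only controls $\norm{\Delta u}_{2,[0,T]}$. I would handle this by causality. Both the delay and the stable convolution operator are causal, so truncation commutes in the sense that $P_T(G\,\Delta u)=P_T\bigl(G\,P_T\Delta u\bigr)$, where $P_T$ is truncation to $[0,T]$. Then
\[
\norm{\Delta T_q}_{2,[0,T]}=\norm{P_T G P_T\Delta u}_2\le\norm{G P_T\Delta u}_2\le g_T\norm{P_T\Delta u}_2\le g_T U_T.
\]
This relies on zero initial conditions for the perturbation dynamics and zero input history before time $0$. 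I would state these explicitly as implicit in writing $\Delta T_q(s)=G_T(s)e^{-s\Delta_T}\Delta u(s)$, since nonzero history would add a free-response term that the lemma does not account for.

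For the second claim, I would simply observe that $U_T(\delta)$ denotes the $L_2[0,T]$ bound on the fuel perturbation produced by a release latency $\delta=R_k+\Delta_{\rm ver}+\Delta_T$. Applying the first part with this $U_T(\delta)$ gives $\norm{\Delta T_q}_{2,[0,T]}\le g_TU_T(\delta)$. Hence $g_TU_T(\delta)\le Q_{\max}$ forces the torque perturbation below the tolerance, so the condition is sufficient. I would also remark that, through \Cref{lem:response-monotonicity}, $\delta$ is nondecreasing in $L_{\rm kem}(\lambda)$. Therefore, if $U_T(\cdot)$ is monotone in $\delta$, the admissibility condition defines an upper bound on ciphertext length, but this remark is not needed for the lemma itself.
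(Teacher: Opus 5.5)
Your proposal is correct and follows the paper's argument: the delay factor $e^{-s\Delta_T}$ has unit modulus on the imaginary axis, so $\norm{G_Te^{-s\Delta_T}}_\infty=\norm{G_T}_\infty\le g_T$, the $H_\infty$ induced-gain property gives the torque bound, and the delay-admissibility claim follows immediately. Your causality/truncation step $P_TGP_T$ and your explicit zero-initial-condition caveat supply the passage from $L_2[0,\infty)$ to the window $[0,T]$, which the paper's proof uses without comment.
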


\begin{proof}
The delay operator $e^{-s\Delta_T}$ has unit magnitude on the imaginary axis. Therefore its induced $L_2$ gain is one. The $H_\infty$ induced-gain property gives
\[
\norm{\Delta T_q}_{2,[0,T]}
\le \norm{G_T}_\infty \norm{\Delta u}_{2,[0,T]}
\le g_T U_T.
\]
If $g_TU_T(\delta)\le Q_{\max}$, the torque perturbation remains within the declared tolerance. The transfer-function assumption supplies the gain bound, and the latency expression supplies the safe cryptographic control delay condition.
\end{proof}

\begin{lemma}[Shaft torsional compliance and authenticated sampling interval]\label{lem:torsional-sampling}
For torsional dynamics
\[
J_s\ddot\theta_s+D_s\dot\theta_s+\Gamma_s^{-1}\theta_s=\Delta T_q(t),
\]
with $J_s,D_s,\Gamma_s>0$, natural frequency $\omega_n=(J_s\Gamma_s)^{-1/2}$. To sample authenticated torsional telemetry at least $q_s>2$ times per torsional period, it suffices that
\[
h_k\le \frac{2\pi}{q_s}\sqrt{J_s\Gamma_s}.
\]
\end{lemma}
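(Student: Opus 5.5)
The plan is to first identify the natural frequency of the homogeneous torsional oscillator and then convert the sampling requirement into an inequality on $h_k$. Dividing the torsional equation by $J_s>0$ gives
\[
\ddot\theta_s+\frac{D_s}{J_s}\dot\theta_s+\frac{1}{J_s\Gamma_s}\theta_s=\frac{\Delta T_q(t)}{J_s}.
\]
Comparing with the standard second-order form $\ddot\theta+2\zeta\omega_n\dot\theta+\omega_n^2\theta$ identifies the stiffness-to-inertia ratio $\omega_n^2=1/(J_s\Gamma_s)$. Since $J_s,\Gamma_s>0$, this is well defined and $\omega_n=(J_s\Gamma_s)^{-1/2}$, which proves the first assertion.

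Next I would define the undamped torsional period $T_n=2\pi/\omega_n=2\pi\sqrt{J_s\Gamma_s}$. This is the period I will use as the reference time scale. I take ``at least $q_s$ samples per torsional period'' to mean that the number of authenticated sampling intervals fitting in one period satisfies $T_n/h_k\ge q_s$. Equivalently, $h_k\le T_n/q_s=\frac{2\pi}{q_s}\sqrt{J_s\Gamma_s}$. Conversely, if $h_k$ obeys this bound, any window of length $T_n$ contains at least $\lfloor T_n/h_k\rfloor\ge q_s$ consecutive sample instants $t_k$. That gives sufficiency.

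I would add a remark justifying the condition $q_s>2$. Since $q_s>2$, the sampling frequency $2\pi/h_k\ge q_s\omega_n>2\omega_n$ strictly exceeds the Nyquist rate of the dominant torsional mode. Authenticated telemetry therefore resolves $\theta_s,\omega_s$ without aliasing of that mode.

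The only delicate step is the choice of reference period. With damping, the oscillation occurs at the damped frequency $\omega_d=\omega_n\sqrt{1-\zeta^2}\le\omega_n$ (when underdamped). Its period $2\pi/\omega_d\ge T_n$, so using $T_n$ is conservative: the bound on $h_k$ derived from $\omega_n$ also guarantees at least $q_s$ samples per damped period. I will note this monotonicity explicitly so that the sufficient condition holds regardless of $D_s$.
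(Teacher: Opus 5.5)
Your argument is correct and follows the paper's proof: identify $\omega_n=(J_s\Gamma_s)^{-1/2}$, set $T_n=2\pi\sqrt{J_s\Gamma_s}$, and read the sampling requirement as $h_k\le T_n/q_s$. Your window-counting, Nyquist, and damped-period remarks are sound additions the paper omits, with one small nit: the step $\lfloor T_n/h_k\rfloor\ge q_s$ in your converse needs $q_s\in\N$ (otherwise monotonicity of the floor only gives $\lfloor T_n/h_k\rfloor\ge\lfloor q_s\rfloor$), though an integer $q_s$ is the natural reading of a sample count anyway.
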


\begin{proof}
The undamped natural frequency of the torsional oscillator is $\omega_n=(J_s\Gamma_s)^{-1/2}$, so its period is $T_n=2\pi/\omega_n=2\pi\sqrt{J_s\Gamma_s}$. Sampling at least $q_s$ times per period requires $h_k\le T_n/q_s$. Substitution yields the bound. The torsional-compliance assumption supplies $\Gamma_s$, and the authenticated sampling requirement supplies the telemetry interval conclusion.
\end{proof}

\begin{lemma}[Compressor surge margin and alarm conservatism]\label{lem:surge-alarm}
Let $M_{s,k}=M_{s,0}-\gamma_{\rm op}\abs{d_{\rm op,k}}-\gamma_\pi\abs{\epsilon_{\pi,k}}$ and define an alarm threshold
\[
\eta_k=\eta_0\bigl(1+\beta_s/M_{s,k}\bigr)^{-1}
\]
for $M_{s,k}>0$. Then $\eta_k$ is increasing in $M_{s,k}$ and satisfies
\[
\frac{\partial \eta_k}{\partial M_{s,k}}
=
\eta_0\frac{\beta_s}{M_{s,k}^2(1+\beta_s/M_{s,k})^2}>0.
\]
\end{lemma}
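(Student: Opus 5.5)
The plan is a direct one-variable calculus argument, treating $\eta_k$ as a function of the single variable $M=M_{s,k}$ on $(0,\infty)$, with $\eta_0>0$ and $\beta_s>0$ held fixed. The signs matter. I will assume $\eta_0>0$ and $\beta_s>0$, as the phrase ``alarm conservatism'' and the claimed strict positivity implicitly require. If $\beta_s=0$ the derivative vanishes, and if $\eta_0<0$ the sign flips, so I will state these assumptions explicitly at the start.

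First, rewrite the threshold in rational form to avoid nested fractions:
\[
\eta_k=\eta_0\Bigl(1+\frac{\beta_s}{M}\Bigr)^{-1}=\frac{\eta_0 M}{M+\beta_s},
\]
which is valid because $M>0$ and $M+\beta_s>0$. In this form $\eta_k$ is a smooth function of $M$ on the domain. Second, differentiate by the quotient rule:
\[
\frac{\partial\eta_k}{\partial M}=\eta_0\frac{(M+\beta_s)-M}{(M+\beta_s)^2}=\frac{\eta_0\beta_s}{(M+\beta_s)^2}.
\]
Third, check that this agrees with the displayed expression. Multiplying numerator and denominator by $M^{-2}$ gives
\[
\frac{\eta_0\beta_s}{(M+\beta_s)^2}=\frac{\eta_0\beta_s}{M^2(1+\beta_s/M)^2}.
\]
Alternatively, the chain rule applied to $(1+\beta_s/M)^{-1}$ gives $-(1+\beta_s/M)^{-2}\cdot(-\beta_s/M^2)$ directly, which produces the same expression.

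Finally, strict positivity follows because the numerator $\eta_0\beta_s$ is positive and the denominator is a positive square. A positive derivative on the interval $(0,\infty)$ implies that $\eta_k$ is strictly increasing there, by the mean value theorem. I would close with the interpretation: shrinking surge margin lowers $\eta_k$, so the residual gate in the control-release predicate becomes more conservative exactly when the compressor approaches surge. The alarm probability bound of \Cref{lem:kalman-threshold} then evaluates at this smaller threshold.

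I expect no genuine obstacle. The only delicate point is making the sign hypotheses on $\eta_0$ and $\beta_s$ explicit and restricting to $M_{s,k}>0$, where the formula is defined.
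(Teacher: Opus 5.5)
Your proposal is correct and essentially matches the paper's proof: the paper differentiates $\eta_0(1+\beta_s/M)^{-1}$ by the chain rule (your stated alternative), and your quotient-rule computation on $\eta_0 M/(M+\beta_s)$ is the same calculation in different notation. Stating $\eta_0>0$ and $\beta_s>0$ explicitly is a worthwhile addition, since the paper relies on both signs tacitly when it concludes strict positivity.
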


\begin{proof}
Differentiate $\eta_k=\eta_0(1+\beta_s/M)^{-1}$ with $M=M_{s,k}>0$:
\[
\frac{\dd \eta_k}{\dd M}
=
-\eta_0(1+\beta_s/M)^{-2}(-\beta_s/M^2)
=
\eta_0\frac{\beta_s}{M^2(1+\beta_s/M)^2}>0.
\]
Thus smaller surge margin makes the threshold smaller, increasing alarm conservatism. The compressor operating-line model determines $M_{s,k}$, and the threshold formula determines the integrity alarm conclusion.
\end{proof}

\begin{proposition}[Asymptotic complexity of the co-design transcript]\label{prop:transcript-complexity}
Suppose the KEM ciphertext length is $L_{\rm kem}(\lambda)=\Theta(\lambda\log\lambda)$, the zero-knowledge transcript has length $L_{\rm zk}(\lambda)=\Theta(\lambda)$, the tag length is $O(\lambda)$, and the plant telemetry dimension is fixed. Then the per-epoch communication complexity of $\Pi_{\rm CSS}$ is $\Theta(\lambda\log\lambda)$ bits, and the fixed-priority response-time contribution of cryptographic payload is $\Theta(\lambda\log\lambda/B_{\rm bus})$.
\end{proposition}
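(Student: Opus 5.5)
The plan is to apply the payload decomposition of the Construction section term by term, and then pass from bits to time via the serialization identity of \Cref{asm:bus-abstraction}.

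\textbf{Communication complexity.} Start from
\[
L_{{\rm epoch},i}(\lambda)=L_{\rm kem}(\lambda)+L_{\rm zk}(\lambda)+L_{\rm tag}+L_{\rm tel}+L_{\rm meta}+L_{\rm proto}.
\]
Since the plant telemetry dimension is fixed, I will treat $L_{\rm tel}$, $L_{\rm meta}$ and $L_{\rm proto}$ as $O(1)$ in $\lambda$. This needs a small justification for $L_{\rm meta}$: the epoch counter and mode/bus/predicate metadata have length bounded independently of $\lambda$, or at most $O(\log k)$ for the epoch index, which is $\lambda$-independent. The hypotheses give $L_{\rm zk}=\Theta(\lambda)$ and $L_{\rm tag}=O(\lambda)$, so both lie in $O(\lambda)\subseteq o(\lambda\log\lambda)$. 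The upper bound is then $L_{\rm epoch}=O(\lambda\log\lambda)$. The lower bound $L_{\rm epoch}\ge L_{\rm kem}=\Omega(\lambda\log\lambda)$ holds because every summand is nonnegative. Together these give $\Theta(\lambda\log\lambda)$. If the statement is read as covering all messages of $\Pi_{\rm CSS}$ in an epoch, including the AEAD ciphertext $C_k$, I will add the assumption that the AEAD expansion is $O(\lambda)$ (a nonce plus a tag) over a fixed-size plaintext. The same argument then goes through unchanged.

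\textbf{Response-time contribution.} Write $C_i(\lambda)=L_{{\rm epoch},i}(\lambda)/B_{\rm bus}+C_i^{\rm proto}$. The cryptographic share is $(L_{\rm kem}+L_{\rm zk}+L_{\rm tag})/B_{\rm bus}=\Theta(\lambda\log\lambda/B_{\rm bus})$, which follows directly from the first step. Two bounds are then needed.

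For the lower bound, use the serialization envelope $\delta_{{\rm tot},i}\ge L_{{\rm epoch},i}/B_{\rm bus}+\dots$. Equivalently, note that $R_i\ge C_i$ from the recurrence. Together with \Cref{lem:response-monotonicity} for monotonicity, this gives a contribution that is $\Omega(\lambda\log\lambda/B_{\rm bus})$.

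For the upper bound, the higher-priority interference $\sum_{j\in hp(i)}\lceil (R_i+J_j)/P_j\rceil C_j$ must not inflate the order.

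\textbf{Main obstacle.} The interference terms are the step I expect to be hardest, because the higher-priority tasks may themselves carry $\Theta(\lambda\log\lambda)$ cryptographic payloads, and the least fixed point depends nonlinearly on them. I would first impose the standard schedulability hypothesis that the higher-priority utilization satisfies $U_{hp}=\sum_{j\in hp(i)}C_j/P_j\le \bar U<1$ uniformly in $\lambda$. Bounding the ceiling by $\lceil x\rceil\le x+1$ gives
\[
R_i\le \frac{C_i+B_i+\sum_{j\in hp(i)}(1+J_j/P_j)C_j}{1-\bar U}.
\]
With a fixed number of tasks and fixed periods, blocking and jitter, each $C_j$ is $O(\lambda\log\lambda/B_{\rm bus})+O(1)$. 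The whole right-hand side is therefore $O(\lambda\log\lambda/B_{\rm bus})$ plus $\lambda$-independent protocol constants. Subtracting the payload-free baseline response time, which is obtained by setting the cryptographic lengths to zero and using monotonicity, isolates the cryptographic contribution as $\Theta(\lambda\log\lambda/B_{\rm bus})$. I would state this uniform-utilization condition explicitly as the implicit regime of the proposition, since without it the fixed point can diverge and no asymptotic claim holds.
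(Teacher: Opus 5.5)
Your communication-complexity argument is the paper's own: sum the payload components, note that everything except $L_{\rm kem}$ is $O(\lambda)$ or $O(1)$, and get the matching lower bound from nonnegativity. Your handling of $L_{\rm meta}$ and of the AEAD expansion is slightly more careful than the paper's proof, which omits both.

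The two treatments differ in the response-time half. The paper reads ``response-time contribution of cryptographic payload'' as just the serialization term $L_{\rm epoch}/B_{\rm bus}$ inside $C_i$. It sets $C_i^{\rm proto}$ aside as non-asymptotic overhead and never touches the interference sum. Your one-line observation that the cryptographic share of $C_i$ is $\Theta(\lambda\log\lambda/B_{\rm bus})$ is already that entire proof.

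Your fixed-point analysis proves a strictly stronger claim, about $R_i$ itself. Write $F^0$ for the payload-free recurrence map and $\Delta_i$ for task $i$'s cryptographic serialization time. The increment $R_i-R_i^0$ is at least $\Delta_i$, since $R_i=F(R_i)\ge F^0(R_i)+\Delta_i\ge R_i^0+\Delta_i$. It is at most your $(1-\bar U)^{-1}$ bound. The paper does not prove this statement, and it genuinely needs your extra hypothesis.

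That hypothesis excludes the obstacle you named rather than resolving it. With fixed periods and fixed $B_{\rm bus}$, higher-priority tasks carrying $\Theta(\lambda\log\lambda)$-bit payloads make $U_{hp}=\sum_{j}C_j/P_j$ diverge. A uniform $\bar U<1$ therefore forces their costs to be $\lambda$-bounded.

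Suppose instead you let $B_{\rm bus}$ grow to keep $U_{hp}$ bounded. Then the $\lambda$-independent protocol constants are no longer absorbed. Once $B_{\rm bus}$ outgrows $\lambda\log\lambda$, a vanishing payload can still push $R_i$ across a ceiling boundary. That adds a constant-size $C_j$ (the discontinuity of regime $\mathcal R_2$), so $R_i-R_i^0=\Theta(\lambda\log\lambda/B_{\rm bus})$ fails.

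Your stronger reading is clean when $B_{\rm bus}$ is fixed and higher-priority traffic has $\lambda$-bounded cost, and you should state that regime explicitly. The paper's weaker reading needs none of this.
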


\begin{proof}
The per-epoch payload length is
\[
L_{\rm epoch}(\lambda)=L_{\rm kem}(\lambda)+L_{\rm zk}(\lambda)+L_{\rm tag}(\lambda)+L_{\rm tel}+L_{\rm proto}.
\]
By assumption, $L_{\rm kem}(\lambda)=\Theta(\lambda\log\lambda)$, $L_{\rm zk}(\lambda)=\Theta(\lambda)$, $L_{\rm tag}=O(\lambda)$, and $L_{\rm tel}+L_{\rm proto}=O(1)$ for fixed telemetry dimension and protocol abstraction. The dominating term is $\Theta(\lambda\log\lambda)$; therefore $L_{\rm epoch}(\lambda)=\Theta(\lambda\log\lambda)$. Transmission time divides payload length by $B_{\rm bus}$ and adds non-asymptotic protocol overhead, so the cryptographic contribution is $\Theta(\lambda\log\lambda/B_{\rm bus})$. The asymptotic ciphertext assumption supplies the complexity conclusion, and the bus model supplies the response-time conclusion.
\end{proof}

\begin{theorem}[Hybrid-game composable security bound]\label{thm:hybrid-css-bound}
Assume LWE/SIS hardness, nonce-respecting authenticated encryption, zero-knowledge soundness, PUF extraction as in \Cref{lem:leftover}, bounded leakage, and deadline gating. Then for every admissible probabilistic polynomial-time environment $\Zed$ and adversary $\A$, there exists a simulator $\Sim$ such that
\[
\abs{
\Prob[\Zed(\Real_{\Pi_{\rm CSS},\A})=1]-
\Prob[\Zed(\Ideal_{\mathcal F_{\rm css},\Sim})=1]
}
\le
\epsilon_{\rm kem}+\epsilon_{\rm aead}+\epsilon_{\rm zk}
+\epsilon_{\rm puf}+\epsilon_{\rm tag}
+\epsilon_{\rm bus}+\epsilon_{\rm st},
\]
where
\[
\epsilon_{\rm puf}=
\varepsilon+\frac12 2^{-(\mu_{\rm puf}-\ell_{\rm side}-\ell_{\rm vib}\abs{\delta_{\rm tc}}-\Delta H_{\rm ch}-\kappa)/2}.
\]
\end{theorem}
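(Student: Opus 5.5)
We prove the bound by a standard hybrid argument. We write down a sequence of games $\mathsf H_0,\dots,\mathsf H_7$ with $\mathsf H_0=\Real_{\Pi_{\rm CSS},\A}$ and $\mathsf H_7=\Ideal_{\mathcal F_{\rm css},\Sim}$. Each transition changes exactly one component of $\Pi_{\rm CSS}$, and by the triangle inequality the environment's total distinguishing gap is at most the sum of the per-step gaps. The simulator $\Sim$ is defined at the end as the composition of all substitutions made along the way. It runs $\A$ internally, generates KEM ciphertexts, zero-knowledge transcripts and AEAD ciphertexts from public data only, and forwards to $\mathcal F_{\rm css}$ the release and refresh decisions that the ideal predicate $\Psi_k$ permits.

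The hybrids, in order, are as follows.

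\textbf{$\mathsf H_1$.} Each $K_k^{\rm kem}$ is replaced by an independent uniform key. The distance is bounded by $\epsilon_{\rm kem}$ via a reduction to the KEM's IND-CPA/CCA game, which under \Cref{asm:crypto-hardness} reduces to $\Adv_{\LWE}$. A factor of the number of epochs is absorbed into $\epsilon_{\rm kem}$ by the usual guessing hybrid.

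\textbf{$\mathsf H_2$.} The PUF contribution $H_k^{\rm puf}$ is replaced by uniform bits. By \Cref{lem:leftover}, applied with side information $Z$ equal to the adversary's full view, including leakage $\ell_k$ and channel observations, this costs exactly $\epsilon_{\rm puf}$. We take the bounded-leakage hypothesis to guarantee the smooth min-entropy premise of \Cref{lem:leftover}. After this step $K_k$ is a KDF of uniform input and may be treated as uniform; KDF error, if any, is folded into $\epsilon_{\rm kem}$.

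\textbf{$\mathsf H_3$.} Real transcripts of $\Pi_{\rm zk}$ are replaced by simulated ones, at cost $\epsilon_{\rm zk}$ by \Cref{asm:zk-auth}. The soundness error $\epsilon^{\rm zk}_{\rm snd}$ is charged to the same term or assumed absorbed.

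\textbf{$\mathsf H_4$.} AEAD encryptions are replaced by encryptions of zero-strings of equal length, and decryption of any non-honest ciphertext is rejected. Because keys are now uniform and nonces are unique by \Cref{asm:nonce-discipline}, this costs $\epsilon_{\rm aead}$.

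\textbf{$\mathsf H_5$.} Every tag verification on a pair $(y_k,\sigma_k)$ not produced honestly is replaced by reject. The gap is bounded by the probability of a forgery, namely $\epsilon_{\rm tag}$ via EUF-CMA. This step is what makes accepted telemetry coincide with the authenticated telemetry that $\mathcal F_{\rm css}$ outputs.

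\textbf{$\mathsf H_6$ and $\mathsf H_7$.} The real control-release predicate is replaced by the ideal schedulability and stability predicate. The two games differ only on the event that a cryptographic message misses its deadline or that $\mu_{\rm lat}\le 0$ while the real system still releases. By the deadline-gating hypothesis together with \Cref{lem:response-monotonicity} (which fixes $R_k$ as the least fixed point used in both worlds), a union bound gives a difference of at most $\epsilon_{\rm bus}+\epsilon_{\rm st}$.

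At that point the game is syntactically the ideal execution with the simulator described above, and summing the seven gaps yields the stated bound.

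The main obstacle is step $\mathsf H_2$ in combination with the ordering. Leakage $\ell_k$ may depend on $K_k$ itself, and the KDF combines three sources. The leftover hash argument therefore has to be applied to $S_{\rm puf}$ conditioned on a view that already contains outputs derived from $K_k$, which could invalidate the entropy premise. I would handle this first by performing the KEM hybrid before the PUF hybrid. Once $K_k^{\rm kem}$ is uniform, the only key-dependent leakage is accounted for by the $\beta_K$ and smooth min-entropy bounds, so \Cref{lem:leftover} applies with $Z=(\text{view},\ell_k)$. A secondary subtlety is that the timing events of $\mathsf H_6$ must be distributed identically in both worlds. This holds because the simulated ciphertexts preserve lengths $L_{\rm kem}$, $L_{\rm zk}$ and $L_{\rm tag}$, so $R_k$ and $W_{{\rm act},k}$ are unchanged by the earlier substitutions.
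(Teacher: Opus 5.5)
Your hybrid argument has the same skeleton as the paper's: the same seven error terms, each charged to the same assumption, summed by the triangle inequality. Two of your steps need repair, and in both cases the paper's choice is what repairs them.

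The first is the position of the extraction step. You apply \Cref{lem:leftover} in $\mathsf H_2$ with $Z$ equal to the adversary's full view. At that point the view still contains the real $\Pi_{\rm zk}$ transcripts, whose witness is the PUF-derived secret (\Cref{asm:zk-auth}). The leftover hash lemma is information-theoretic, and a computationally zero-knowledge transcript can determine its witness information-theoretically (e.g.\ via a binding commitment to it). The budget $\mu_{\rm puf}-\ell_{\rm side}-\ell_{\rm vib}\abs{\delta_{\rm tc}}-\Delta H_{\rm ch}$ accounts for enrollment data, side channels, vibration and channel degradation, not for proof transcripts. So the bounded-leakage hypothesis does not supply the min-entropy premise you invoke. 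Your obstacle paragraph removes only the dependence through $K_k^{\rm kem}$. The paper simulates the attestation transcripts ($G_3$) before replacing PUF entropy ($G_4$), so at the extraction step the side information no longer depends on the witness. You should likewise move $\mathsf H_2$ after $\mathsf H_3$.

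The second is the tag step. Your $\mathsf H_5$ removes only false acceptances. In this model the verifier recomputes the quantized shaft-speed component from independently filtered telemetry. An honestly produced pair is therefore rejected with probability $p_{\rm fr}^{N}$, bounded in \Cref{lem:shaft-noise-fr}. The ideal message-authentication acceptance, by contrast, delivers honest telemetry whenever the predicates of $\mathcal F_{\rm css}$ hold. So your final hybrid is not yet syntactically the ideal execution, and your seven gaps omit this difference. The paper charges this completeness error explicitly to $\epsilon_{\rm tag}$, together with the EUF-CMA forgery probability. You need to do the same, or else give the simulator an explicit means in $\mathcal F_{\rm css}$ of suppressing delivery with matching probability.
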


\begin{proof}
Define a sequence of games. Game $G_0$ is the real execution. Game $G_1$ replaces KEM-derived keys by uniform keys. The distinguishing gap is at most $\epsilon_{\rm kem}$ by the KEM security reduction under LWE/SIS hardness: an environment distinguishing $G_0$ from $G_1$ yields a KEM distinguisher with the same gap after embedding the challenge ciphertext into the epoch selected by the environment.

Game $G_2$ replaces authenticated-encryption ciphertexts under uniform keys by ideal confidential authenticated channels. The gap between $G_1$ and $G_2$ is at most $\epsilon_{\rm aead}$ by nonce-respecting AEAD security.

Game $G_3$ replaces the zero-knowledge attestation transcript by simulated transcripts. The gap is at most $\epsilon_{\rm zk}$ by zero-knowledge, and accepting false platform statements contributes at most the soundness component included in $\epsilon_{\rm zk}$.

Game $G_4$ replaces PUF-extracted entropy with uniform entropy. \Cref{lem:leftover} bounds this gap by $\epsilon_{\rm puf}$ after subtracting side-channel, vibration, and channel entropy losses.

Game $G_5$ replaces valid tags with ideal message-authentication acceptance. The gap is at most $\epsilon_{\rm tag}$ by EUF-CMA security of the tag layer and the explicit false-rejection component associated with telemetry noise.

Game $G_6$ replaces bus and plant release predicates by the ideal functionality's admission predicate. If a deadline miss occurs, the games may differ; otherwise the same admissible command is released. This contributes at most $\epsilon_{\rm bus}$. If the latency-induced margin is nonpositive, the ideal functionality may suppress a command that the real system would otherwise attempt; this contributes at most $\epsilon_{\rm st}$.

By the triangle inequality for statistical distinguishing probabilities,
\[
\abs{\Prob[G_0=1]-\Prob[G_6=1]}
\le
\sum_{j=0}^{5}\abs{\Prob[G_j=1]-\Prob[G_{j+1}=1]},
\]
which is the stated bound. The lattice assumption is used in $G_0\to G_1$, AEAD security in $G_1\to G_2$, zero knowledge in $G_2\to G_3$, entropy extraction in $G_3\to G_4$, tag authenticity in $G_4\to G_5$, and bus-stability gating in $G_5\to G_6$.
\end{proof}

\begin{corollary}[PUF min-entropy and distinguishing advantage]\label{cor:puf-advantage}
If
\[
\mu_{\rm puf}\ge
\kappa+2\log(1/\epsilon)+\ell_{\rm side}+\ell_{\rm vib}\abs{\delta_{\rm tc}}+\Delta H_{\rm ch},
\]
then the PUF extraction component of the composable distinguishing advantage is at most $\varepsilon+\epsilon/2$.
\end{corollary}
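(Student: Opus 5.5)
The plan is to specialize the bound of \Cref{lem:leftover}, as it appears in the definition of $\epsilon_{\rm puf}$ in \Cref{thm:hybrid-css-bound}, and check that the hypothesis on $\mu_{\rm puf}$ makes the exponential term at most $\epsilon/2$. Write
\[
m_0-\kappa=\mu_{\rm puf}-\ell_{\rm side}-\ell_{\rm vib}\abs{\delta_{\rm tc}}-\Delta H_{\rm ch}-\kappa .
\]
Rearranging the hypothesis gives $m_0-\kappa\ge 2\log(1/\epsilon)$, with $\log$ taken base $2$ to match the base-$2$ exponent in \Cref{lem:leftover}. We first confirm that $m_0$ is a legitimate smooth min-entropy lower bound. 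This holds because the hypothesis of \Cref{lem:leftover} is exactly $\Hsmooth(S_{\rm puf}\mid Z)\ge m_0$, and \Cref{asm:entropy-regularity} together with the leakage budgets supplies it.

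Next we substitute into the exponent. Since $t\mapsto 2^{-t/2}$ is decreasing, we get
\[
\tfrac12\, 2^{-(m_0-\kappa)/2}\le \tfrac12\, 2^{-\log(1/\epsilon)}=\tfrac12\,\epsilon .
\]
Adding the smoothing slack $\varepsilon$ from \Cref{lem:leftover} then gives
\[
\Delta_{\rm stat}\bigl((h,K,Z),(h,U_\kappa,Z)\bigr)\le \varepsilon+\epsilon/2 .
\]
By the hop $G_3\to G_4$ in the proof of \Cref{thm:hybrid-css-bound}, the PUF component of the composable advantage is exactly this statistical distance. Statistical distance bounds any distinguisher's advantage, so the PUF component is at most $\varepsilon+\epsilon/2$.

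The only real obstacle is bookkeeping rather than mathematics. We must make sure the logarithm base is $2$; otherwise the bound picks up a constant factor in the exponent and would need an adjusted statement. We must also keep the two epsilons distinct: $\varepsilon$ is the smoothing parameter and $\epsilon$ is the target extraction error. In addition, $\epsilon\in(0,1]$ should be assumed so that $\log(1/\epsilon)\ge 0$ and the rearranged inequality is meaningful. Finally, we must confirm that $\kappa\le m_0$ holds under the hypothesis, which it does, so the leftover hash lemma is invoked in its valid regime.
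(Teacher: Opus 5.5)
Your proposal is correct and is essentially the paper's own proof. Substitute the hypothesis into the exponent of \Cref{lem:leftover}, so that the leftover-hash term is at most $\tfrac12\epsilon$, then add the smoothing error $\varepsilon$; your remarks that $\log$ is base $2$ and that $\epsilon\in(0,1]$ are sensible clarifications the paper leaves implicit.
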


\begin{proof}
Substitute the assumed lower bound into \Cref{lem:leftover}. The exponent satisfies
\[
-\frac{\mu_{\rm puf}-\ell_{\rm side}-\ell_{\rm vib}\abs{\delta_{\rm tc}}-\Delta H_{\rm ch}-\kappa}{2}
\le -\log(1/\epsilon),
\]
so the leftover-hash term is at most $\frac12\epsilon$. Adding the smoothing error $\varepsilon$ gives the claim. The min-entropy assumption supplies the secrecy conclusion.
\end{proof}

\begin{proposition}[Authentication failure bound with telemetry noise and EUF-CMA error]\label{prop:auth-failure}
Let $p_{\rm forge}\le \epsilon_{\EUFCMA}$ and let $p_{\rm fr}^{N}$ be as in \Cref{lem:shaft-noise-fr}. For one authenticated telemetry epoch,
\[
\Prob[\mathsf{AuthFail}_k]\le
\epsilon_{\EUFCMA}+2\exp\left(-\frac{\Delta^2}{16\sigma_N^2}\right)+p_{\rm nonce},
\]
where $p_{\rm nonce}$ is the probability of nonce reuse or transcript-index collision.
\end{proposition}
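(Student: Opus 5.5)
The plan is to decompose the authentication-failure event into three sub-events and apply a union bound. For one epoch $k$, I would write
\[
\mathsf{AuthFail}_k\subseteq \mathsf{Forge}_k\cup \mathsf{FR}^N_k\cup \mathsf{Coll}_k .
\]
Here $\mathsf{Forge}_k$ is the event that $\Ver_{K_k}$ accepts a message/tag pair never produced by the legitimate tagger under $K_k$. $\mathsf{FR}^N_k$ is the event that a legitimately tagged telemetry vector is rejected because the quantized shaft-speed component recomputed by the verifier differs from the one that was tagged. $\mathsf{Coll}_k$ is the event that a nonce is reused under $K_k$ or that two transcripts share an epoch index. Authentication failure, read as either accepting an inauthentic vector or rejecting an authentic one, lies in this union. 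Any failure not caused by shaft-speed mismatch or by nonce or index collision is, by definition, acceptance of an untagged message, which is a forgery.

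The steps are then taken term by term. First, I would condition on the complement of $\mathsf{Coll}_k$. On that event \Cref{asm:nonce-discipline} holds for the epoch: nonces are unique and the tag input $y_k\Vert r_k\Vert\operatorname{enc}(S_k)\Vert\operatorname{enc}(\phi_k)\Vert k$ is bound to the index $k$. This lets a standard reduction embed the $\EUFCMA$ challenge key as $K_k$. The reduction answers tag queries for other epochs with its own keys, which is legitimate after the key-independence hop $G_0\to G_1$ of \Cref{thm:hybrid-css-bound}, and it outputs any accepted fresh pair as a forgery. This gives $\Prob[\mathsf{Forge}_k\cap\overline{\mathsf{Coll}_k}]\le p_{\rm forge}\le\epsilon_{\EUFCMA}$. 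Second, I would bound $\Prob[\mathsf{FR}^N_k]$ directly by \Cref{lem:shaft-noise-fr}, which yields $2\exp(-\Delta^2/(16\sigma_N^2))$. Third, $\Prob[\mathsf{Coll}_k]=p_{\rm nonce}$ by definition. Summing the three probabilities gives the stated bound.

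The main obstacle is justifying the containment, and in particular the claim that a forgery probability measured in the idealized key setting is comparable to $\epsilon_{\EUFCMA}$ in the real one. If the key is only computationally close to uniform, a KEM term would seem to be needed. I would handle this by interpreting $p_{\rm forge}$ as the forgery probability in the actual execution; the hypothesis $p_{\rm forge}\le\epsilon_{\EUFCMA}$ then absorbs that distinction, so no extra term appears. A second point is that the noise-driven false rejection and the adversarial forgery involve different randomness: one comes from telemetry noise and the other from the adversary's coins. The union bound requires no independence between them, so only the containment matters. The proof also needs to note that \Cref{lem:shaft-noise-fr} applies because, by \Cref{asm:nonce-discipline}, the tag covers the residual computed from the noisy $N_H$ channel.
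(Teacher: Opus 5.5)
Your proposal is correct and follows the paper's proof: you place $\mathsf{AuthFail}_k$ inside the union of a tag forgery, a shaft-speed false rejection, and a nonce or transcript-index collision, then apply the union bound using $p_{\rm forge}\le\epsilon_{\EUFCMA}$ and \Cref{lem:shaft-noise-fr}. The embedded EUF-CMA reduction and the conditioning on $\overline{\mathsf{Coll}_k}$ are extra scaffolding the paper omits; as you yourself conclude, the hypothesis on $p_{\rm forge}$ already supplies the forgery term directly.
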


\begin{proof}
The event $\mathsf{AuthFail}_k$ is contained in the union of three events: successful forgery under the tag scheme, false rejection due to shaft-speed quantization mismatch, and nonce or transcript-index collision. The union bound gives
\[
\Prob[\mathsf{AuthFail}_k]\le
p_{\rm forge}+p_{\rm fr}^{N}+p_{\rm nonce}.
\]
Using $p_{\rm forge}\le\epsilon_{\EUFCMA}$ and \Cref{lem:shaft-noise-fr} gives the stated expression. The EUF-CMA assumption supplies the forgery term, telemetry-noise regularity supplies the false-rejection term, and nonce discipline supplies the collision term.
\end{proof}

\begin{theorem}[Real-time schedulability inequality for cryptographic co-design]\label{thm:schedulability}
For every safety-critical cryptographic-control task $i$, suppose the response-time recurrence has a least fixed point $R_i$ and
\[
R_i+\Delta_{\rm ver}(\lambda)+\Delta_T
\le
\min\left\{
D_i,\
\frac{\dot N_{\max}-\abs{\dot N_{H,k}}}{L_{\dot N}},\
\frac{M_{s,k}}{L_s},\
\frac{2\pi}{q_s}\sqrt{J_s\Gamma_s}
\right\}.
\]
Then the epoch-$k$ authenticated control release is schedulable, spool-acceleration-admissible, surge-margin-admissible, and torsional-sampling-admissible.
\end{theorem}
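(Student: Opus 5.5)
The proof is a term-by-term unpacking of the minimum on the right-hand side. Write $\delta_{{\rm tot},i}=R_i+\Delta_{\rm ver}(\lambda)+\Delta_T$ for the total authenticated release latency introduced in the construction. A quantity bounded by a minimum is bounded by each of its arguments, so the hypothesis yields four separate inequalities. We then read each one as exactly one of the four admissibility predicates in the conclusion. No new estimate is needed. The work is to match each inequality to the predicate it certifies and to check that the required side conditions hold.

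\emph{Schedulability.} From $\delta_{{\rm tot},i}\le D_i$ and the nonnegativity of $\Delta_{\rm ver}(\lambda)$ and $\Delta_T$, we get $R_i\le D_i$. Here $R_i$ is the least fixed point of the fixed-priority recurrence. By standard response-time analysis \cite{JosephPandya1986,TindellBurnsWellings1995}, under the bounded-jitter, declared-message-set abstraction of \Cref{asm:bus-abstraction}, every job of task $i$ then completes within its deadline. The stronger inequality including $\Delta_{\rm ver}+\Delta_T$ shows that verification and torque propagation also finish before $D_i$. This is the deadline clause of the control-release predicate. \Cref{lem:response-monotonicity} is recorded here as the reason the inequality is a genuine constraint on $L_{\rm kem}(\lambda)$, but the proof does not need it.

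\emph{Spool acceleration and surge margin.} Assume $\dot N_{\max}-\abs{\dot N_{H,k}}>0$ and $M_{s,k}>0$; otherwise the hypothesis is vacuous or infeasible, since the left-hand side is nonnegative. The local Lipschitz constants $L_{\dot N}$ and $L_s$ of \Cref{def:actuation-window} bound the spool-acceleration increment and the surge-margin erosion produced by a delay $\delta$ by $L_{\dot N}\delta$ and $L_s\delta$. Setting $\delta=\delta_{{\rm tot},i}$ gives $\abs{\dot N_{H}}\le\abs{\dot N_{H,k}}+L_{\dot N}\delta\le\dot N_{\max}$ and $M_s\ge M_{s,k}-L_s\delta\ge 0$ over the delay interval. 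These are the spool-acceleration and surge-margin admissibility conditions, and they stay inside $\Omega_{\rm cert}$.

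\emph{Torsional sampling.} The last term gives $\delta_{{\rm tot},i}\le\frac{2\pi}{q_s}\sqrt{J_s\Gamma_s}$. We use that the authenticated sampling interval is bounded by the end-to-end release latency, $h_k\le\delta_{{\rm tot},i}$. \Cref{lem:torsional-sampling} then yields at least $q_s>2$ samples per torsional period.

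The main obstacle is the step $h_k\le\delta_{{\rm tot},i}$, because the paper never states how $h_k$ relates to $R_i$. The core feasibility condition of the system model instead adds $h_k$ to the latency. My first move is to read ``torsional-sampling-admissible'' as meaning that the release latency itself fits within one admissible sampling interval, which is what the theorem's inequality literally controls. If that reading is too weak, I would add the explicit side hypothesis $h_k\le\delta_{{\rm tot},i}$ (sampling no slower than release). A secondary subtlety is that the Lipschitz bounds hold only locally, so we must also use the boundary condition that the trajectory remains in $\Omega_{\rm cert}$ during the delay window.
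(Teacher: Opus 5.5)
Your proposal is correct and follows the paper's proof essentially step for step. The paper also splits the minimum into four inequalities and reads the first as the deadline being met. It multiplies the second and third through by $L_{\dot N}$ and $L_s$ to get $\abs{\dot N_{H,k}}+L_{\dot N}\delta\le\dot N_{\max}$ and $L_s\delta\le M_{s,k}$, where $\delta$ is the total release latency. It treats the fourth as the condition of \Cref{lem:torsional-sampling} with $h_k$ replaced by that latency, which is exactly your ``literal reading,'' so the paper neither needs nor uses your fallback hypothesis $h_k\le\delta_{{\rm tot},i}$.
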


\begin{proof}
The first component of the minimum gives $R_i+\Delta_{\rm ver}+\Delta_T\le D_i$, so the deadline is met. The second gives
\[
L_{\dot N}(R_i+\Delta_{\rm ver}+\Delta_T)\le \dot N_{\max}-\abs{\dot N_{H,k}},
\]
which implies
\[
\abs{\dot N_{H,k}}+L_{\dot N}(R_i+\Delta_{\rm ver}+\Delta_T)\le \dot N_{\max},
\]
so the spool acceleration constraint remains satisfied under the local Lipschitz delay model. The third gives
\[
L_s(R_i+\Delta_{\rm ver}+\Delta_T)\le M_{s,k},
\]
so delay-induced operating-line displacement cannot exhaust the current surge margin in the local bound. The fourth is precisely the torsional authenticated sampling interval condition of \Cref{lem:torsional-sampling} with $h_k$ replaced by total authenticated release latency. Hence all four admissibility properties hold. The bus assumption supplies $R_i$, the spool and surge assumptions supply the two physical inequalities, and torsional compliance supplies the sampling conclusion.
\end{proof}

\begin{corollary}[Post-quantum ciphertext expansion reduces slack]\label{cor:ciphertext-slack}
If $L_{\rm kem}'(\lambda)>L_{\rm kem}(\lambda)$ while all other parameters are fixed, then the real-time slack
\[
S_i=D_i-\bigl(R_i+\Delta_{\rm ver}+\Delta_T\bigr)
\]
is nonincreasing. If $S_i$ crosses zero, composable release by $\mathcal F_{\rm css}$ is denied.
\end{corollary}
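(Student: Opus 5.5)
The plan is to reduce the corollary directly to \Cref{lem:response-monotonicity}, and then read off the denial clause from the release predicate of the construction together with \Cref{def:css-realization}.

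First, fix task $i$ and all parameters other than the KEM ciphertext length: $B_i$, $J_j$, $P_j$, $C_j$ for $j\in hp(i)$, $D_i$, $\Delta_{\rm ver}(\lambda)$, and $\Delta_T$. Replacing $L_{\rm kem}$ by $L_{\rm kem}'>L_{\rm kem}$ changes only $C_i=L_{\rm kem}/B_{\rm bus}+C_i^0$, which strictly increases. Here I read $C_i^0$ as absorbing $L_{\rm zk}+L_{\rm tag}+L_{\rm tel}+L_{\rm proto}$ divided by $B_{\rm bus}$, plus $C_i^{\rm proto}$, as in the payload formula of the construction. \Cref{lem:response-monotonicity} then gives $R_i'\ge R_i$ for the least fixed points.

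One subtlety needs care. If the recurrence for $C_i'$ has no finite fixed point, then $R_i'=+\infty$ and the slack becomes $-\infty$. Under the usual convention this is still consistent with "nonincreasing", and the task is simply unschedulable.

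Second, the slack $S_i=D_i-(R_i+\Delta_{\rm ver}+\Delta_T)$ is an affine function of $R_i$ with coefficient $-1$, and the other terms are held fixed. It follows that $S_i'=S_i-(R_i'-R_i)\le S_i$. The one point I would double-check is that $\Delta_{\rm ver}(\lambda)$ really is held fixed. The statement says "all other parameters are fixed", so the dependence of decapsulation time on $L_{\rm kem}$ is frozen by hypothesis. If it were not frozen, one would need $\Delta_{\rm kem}^{\rm dec}$ nondecreasing in ciphertext length, and I would add that as a remark.

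Third, for the denial clause, note that if $S_i<0$ then $R_i+\Delta_{\rm ver}+\Delta_T>D_i$. Assuming $D_i\le D_{\rm ctrl}$ for the control task, the first entry of $W_{{\rm act},k}$ in \Cref{def:actuation-window} gives $W_{{\rm act},k}\le D_{\rm ctrl}-R_k$. I would then show that $R_k+\Delta_{\rm ver}+\Delta_T<W_{{\rm act},k}$ fails, identifying $R_k$ with $R_i$ for the safety-critical task. The control-release predicate therefore fails, $\Psi_k=0$, and by \Cref{def:css-realization} $\mathcal F_{\rm css}$ outputs admissible control releases only when the schedulability predicate holds, so release is denied.

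I expect the main obstacle to be this identification of the deadline $D_i$ with the actuation-window constraint. The release gate is stated with $W_{{\rm act},k}$ and $D_{\rm ctrl}$ rather than $D_i$, and with strict rather than non-strict inequality. I would handle the boundary case $S_i=0$ by appeal to the strict inequality $<W_{{\rm act},k}$ in the gate. I would then state explicitly the convention that "crosses zero" means the slack becomes nonpositive, under which denial follows.
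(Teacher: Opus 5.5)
Your proof that $S_i$ is nonincreasing matches the paper's. \Cref{lem:response-monotonicity} gives $R_i'\ge R_i$, and $S_i$ is affine in $R_i$ with slope $-1$ once $D_i$, $\Delta_{\rm ver}$ and $\Delta_T$ are frozen.

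The gap is in the denial clause. There you take a detour through $W_{{\rm act},k}$, and the step that should close it fails as stated. You know $R_i+\Delta_{\rm ver}+\Delta_T>D_i$ and $W_{{\rm act},k}\le D_{\rm ctrl}-R_i$. From these, the gate $R_i+\Delta_{\rm ver}+\Delta_T<W_{{\rm act},k}$ is forced to fail only if $D_i\ge D_{\rm ctrl}-R_i$. Your hypothesis $D_i\le D_{\rm ctrl}$ points the wrong way. For example, take $D_i=1$, $D_{\rm ctrl}=10$, $R_i=1$, $\Delta_{\rm ver}+\Delta_T=1/2$, and let the physical entries of $W_{{\rm act},k}$ all be at least $9$. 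Then $S_i=-1/2<0$, yet $W_{{\rm act},k}=9$ and the gate $3/2<9$ passes, so $\Psi_k=1$ and nothing is denied. To repair this route, assume $D_i=D_{\rm ctrl}$, or more generally $D_{\rm ctrl}\le D_i+R_i$. Then $R_i+\Delta_{\rm ver}+\Delta_T>D_i\ge D_{\rm ctrl}-R_i\ge W_{{\rm act},k}$, and the gate fails.

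The paper avoids this issue entirely. It takes the schedulability predicate enforced by $\mathcal F_{\rm css}$ to be the first inequality of \Cref{thm:schedulability}, namely $R_i+\Delta_{\rm ver}+\Delta_T\le D_i$, which is literally $S_i\ge 0$. So $S_i<0$ violates that predicate directly and the ideal functionality refuses release. There is no need to relate $D_i$ to $D_{\rm ctrl}$ or to invoke $W_{{\rm act},k}$.

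This reading also settles your boundary worry. That predicate is non-strict, so ``crosses zero'' should mean $S_i<0$, and $S_i=0$ still satisfies the deadline condition. Your convention that nonpositive slack already forces denial comes from routing through the strict gate; the corollary itself does not assert it.
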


\begin{proof}
By \Cref{lem:response-monotonicity}, $R_i$ is nondecreasing in $L_{\rm kem}$. Therefore $S_i=D_i-(R_i+\Delta_{\rm ver}+\Delta_T)$ is nonincreasing. If $S_i<0$, the first inequality in \Cref{thm:schedulability} fails, and the ideal co-design functionality refuses the release. The ciphertext expansion assumption supplies the timing degradation conclusion.
\end{proof}

\begin{theorem}[Latency-induced input-to-state stability under authenticated control]\label{thm:latency-iss}
Assume there exists a continuously differentiable Lyapunov function $V$ and constants $c_1,c_2,c_3,c_4>0$ such that
\[
c_1\norm{x_{\rm e}}^2\le V(x_{\rm e})\le c_2\norm{x_{\rm e}}^2
\]
and, for authenticated delayed input with total delay $\delta_k=R_k+\Delta_{\rm ver}+\Delta_T$,
\[
\dot V\le
-c_3\norm{x_{\rm e}}^2+c_4\norm{d_k}^2+\alpha_1\delta_k\norm{x_{\rm e}}^2+\alpha_2 s_{w,k}\norm{x_{\rm e}}^2,
\]
where $s_{w,k}=\max\{0,\abs{w_{f,k}}-w_{f,\rm lin}\}$. If
\[
\mu_{\rm lat}(\delta_k)=\frac{c_3}{c_2}-\frac{\alpha_1}{c_1}\delta_k-\frac{\alpha_2}{c_1}s_{w,k}>0,
\]
then the closed-loop engine state is input-to-state stable with bound
\[
\norm{x_{\rm e}(t)}\le
\sqrt{\frac{c_2}{c_1}}e^{-\mu_{\rm lat} t/2}\norm{x_{\rm e}(0)}
+
\sqrt{\frac{c_4}{c_1\mu_{\rm lat}}}\sup_{0\le s\le t}\norm{d_s}.
\]
\end{theorem}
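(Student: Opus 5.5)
The plan is a standard Lyapunov comparison argument followed by a Grönwall estimate. I will treat $\delta_k$ and $s_{w,k}$ as constant on the interval considered, or more conservatively replace them by their suprema. Then $\mu_{\rm lat}$ is a fixed positive constant.

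\textbf{Step 1: convert the derivative bound into a bound in $V$.} From $V\le c_2\norm{x_{\rm e}}^2$ we get $-c_3\norm{x_{\rm e}}^2\le -(c_3/c_2)V$. From $V\ge c_1\norm{x_{\rm e}}^2$ we get $\norm{x_{\rm e}}^2\le V/c_1$. Since $\alpha_1,\alpha_2,\delta_k,s_{w,k}\ge 0$, the destabilizing terms are bounded by
\[
(\alpha_1\delta_k+\alpha_2 s_{w,k})\norm{x_{\rm e}}^2\le \Bigl(\frac{\alpha_1}{c_1}\delta_k+\frac{\alpha_2}{c_1}s_{w,k}\Bigr)V .
\]
Combining the two gives the scalar differential inequality
\[
\dot V\le -\mu_{\rm lat}V+c_4\norm{d_t}^2 .
\]
This is exactly where the specific form of $\mu_{\rm lat}$ in the statement comes from.

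\textbf{Step 2: integrate.} Apply the comparison lemma, equivalently Grönwall with integrating factor $e^{\mu_{\rm lat}t}$, to get
\[
V(x_{\rm e}(t))\le e^{-\mu_{\rm lat}t}V(x_{\rm e}(0))+c_4\int_0^t e^{-\mu_{\rm lat}(t-s)}\norm{d_s}^2\dd s .
\]
Bound the integral by $\frac{c_4}{\mu_{\rm lat}}\sup_{0\le s\le t}\norm{d_s}^2$, which uses $\mu_{\rm lat}>0$.

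\textbf{Step 3: return to the state norm.} Use $c_1\norm{x_{\rm e}(t)}^2\le V(x_{\rm e}(t))$ and $V(x_{\rm e}(0))\le c_2\norm{x_{\rm e}(0)}^2$. Then take square roots, using $\sqrt{a+b}\le\sqrt a+\sqrt b$. This yields the stated estimate with decay rate $\mu_{\rm lat}/2$ and gain $\sqrt{c_4/(c_1\mu_{\rm lat})}$. Both terms are class-$\mathcal{KL}$ and class-$\mathcal K$ respectively, which is the ISS property.

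\textbf{Main obstacle.} The hard part is making the epoch-indexed quantities $\delta_k$ and $s_{w,k}$ compatible with continuous time. I would first assume they are piecewise constant on $[t_k,t_{k+1})$ and take $\mu_{\rm lat}=\inf_k\mu_{\rm lat}(\delta_k)>0$. The differential inequality then holds uniformly, and the integration runs across epochs without changing the argument.

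A related subtlety is that the hypothesis already encodes the delay through the $\alpha_1\delta_k$ term. So no Razumikhin or Krasovskii functional is needed: the delay enters only through the given derivative bound, and I take that bound as given rather than rederiving it. Finally, the saturation term is harmless because $s_{w,k}\ge0$ only enlarges the coefficient that is absorbed into $\mu_{\rm lat}$.
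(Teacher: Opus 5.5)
Your proposal is correct and follows the same route as the paper: use the sandwich bounds to reduce to $\dot V\le -\mu_{\rm lat}V+c_4\norm{d}^2$, integrate with Gr\"onwall, then take square roots. Applying $V\le c_2\norm{x_{\rm e}}^2$ to the decay term and $V\ge c_1\norm{x_{\rm e}}^2$ to the destabilizing terms separately is a cleaner justification of the step the paper only calls ``conservative in $c_1,c_2$''; the paper's route implicitly needs $c_1\le c_2$.
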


\begin{proof}
Using $V\ge c_1\norm{x_{\rm e}}^2$ and $V\le c_2\norm{x_{\rm e}}^2$, the derivative inequality gives
\[
\dot V\le
-\left(c_3-\alpha_1\delta_k-\alpha_2s_{w,k}\right)\norm{x_{\rm e}}^2+c_4\norm{d_k}^2.
\]
Because $\norm{x_{\rm e}}^2\ge V/c_2$ and also the stated margin is conservative in $c_1,c_2$, the inequality implies
\[
\dot V\le -\mu_{\rm lat}V+c_4\norm{d_k}^2
\]
for the declared $\mu_{\rm lat}>0$ after taking the smaller certified decay rate. Gronwall's inequality yields
\[
V(t)\le e^{-\mu_{\rm lat}t}V(0)+\frac{c_4}{\mu_{\rm lat}}\sup_{0\le s\le t}\norm{d_s}^2.
\]
Taking square roots and applying $V(0)\le c_2\norm{x_{\rm e}(0)}^2$ and $V(t)\ge c_1\norm{x_{\rm e}(t)}^2$ gives
\[
\norm{x_{\rm e}(t)}\le
\sqrt{\frac{c_2}{c_1}}e^{-\mu_{\rm lat} t/2}\norm{x_{\rm e}(0)}
+
\sqrt{\frac{c_4}{c_1\mu_{\rm lat}}}\sup_{0\le s\le t}\norm{d_s}.
\]
The Lyapunov assumption supplies stability, the response-time delay supplies the cryptographic latency term, and fuel saturation supplies the margin erosion term.
\end{proof}

\begin{corollary}[Fuel-flow saturation reduces latency margin]\label{cor:fuel-margin}
If $s_{w,k}$ increases while all other quantities remain fixed, then $\mu_{\rm lat}(\delta_k)$ decreases linearly with slope $-\alpha_2/c_1$. The largest admissible delay is
\[
\delta_{k,\max}=
\frac{c_1}{\alpha_1}
\left(\frac{c_3}{c_2}-\frac{\alpha_2}{c_1}s_{w,k}\right),
\]
provided the right-hand side is positive.
\end{corollary}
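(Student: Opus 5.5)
The plan is to read both claims off the explicit formula for $\mu_{\rm lat}(\delta_k)$ in \Cref{thm:latency-iss}:
\[
\mu_{\rm lat}(\delta_k)=\frac{c_3}{c_2}-\frac{\alpha_1}{c_1}\delta_k-\frac{\alpha_2}{c_1}s_{w,k}.
\]
This expression is affine in each of its arguments, so no new analysis is needed.

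For the first claim, hold $c_1,c_2,c_3,\alpha_1,\alpha_2$ and $\delta_k$ fixed and view $\mu_{\rm lat}$ as a function of $s_{w,k}$ alone. It is then affine with derivative $\partial\mu_{\rm lat}/\partial s_{w,k}=-\alpha_2/c_1$. Since $\alpha_2>0$ and $c_1>0$, this slope is strictly negative. Hence increasing the saturation excess $s_{w,k}=\max\{0,\abs{w_{f,k}}-w_{f,\rm lin}\}$ lowers the margin linearly, as stated.

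For the second claim, observe that the ISS conclusion of \Cref{thm:latency-iss} requires exactly $\mu_{\rm lat}(\delta_k)>0$. I will solve this inequality for $\delta_k$. Because $\alpha_1/c_1>0$, the condition is equivalent to
\[
\delta_k<\frac{c_1}{\alpha_1}\left(\frac{c_3}{c_2}-\frac{\alpha_2}{c_1}s_{w,k}\right)=\delta_{k,\max}.
\]
So $\delta_{k,\max}$ is the supremum of delays that are certified by \Cref{thm:latency-iss}. This supremum is nonvacuous (the admissible set $[0,\delta_{k,\max})$ is nonempty, since $\delta_k\ge 0$) precisely when the right-hand side is positive, which is the stated proviso.

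The only real obstacle is interpretive. The margin condition is strict, so $\delta_{k,\max}$ is a supremum that is not itself attained: at $\delta_k=\delta_{k,\max}$ the margin is zero and the bound degenerates. I will phrase ``largest admissible delay'' as this supremum, or equivalently as the boundary of the certified set $\{\delta_k\ge 0:\mu_{\rm lat}(\delta_k)>0\}$. I will then note the consequence that follows from the same explicit formula: $\delta_{k,\max}$ is itself decreasing in $s_{w,k}$ with slope $-\alpha_2/\alpha_1$. This links fuel saturation directly to the allowable cryptographic latency $R_k+\Delta_{\rm ver}+\Delta_T$.
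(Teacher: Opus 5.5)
Your proposal is correct and follows the paper's proof: you differentiate the affine expression for $\mu_{\rm lat}$ from \Cref{thm:latency-iss} with respect to $s_{w,k}$, and you solve $\mu_{\rm lat}>0$ for $\delta_k$. Your added points are also correct refinements that the paper leaves implicit: $\delta_{k,\max}$ is a supremum that is not attained under the strict margin condition, and it decreases in $s_{w,k}$ with slope $-\alpha_2/\alpha_1$.
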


\begin{proof}
Differentiate $\mu_{\rm lat}=c_3/c_2-(\alpha_1/c_1)\delta_k-(\alpha_2/c_1)s_{w,k}$ with respect to $s_{w,k}$ to get $-\alpha_2/c_1$. Setting $\mu_{\rm lat}>0$ and solving for $\delta_k$ yields the stated delay limit. The saturation assumption supplies the stability-margin conclusion.
\end{proof}

\begin{theorem}[Reduction from successful transcript distinction to lattice or authentication failure]\label{thm:reduction}
Assume the environment distinguishes $\Real_{\Pi_{\rm CSS},\A}$ from $\Ideal_{\mathcal F_{\rm css},\Sim}$ with advantage greater than
\[
\epsilon_{\rm aead}+\epsilon_{\rm zk}+\epsilon_{\rm puf}+\epsilon_{\rm tag}+\epsilon_{\rm bus}+\epsilon_{\rm st}+\eta.
\]
Then there exists an algorithm $\B$ with
\[
\Adv_{\LWE}^{\B}+\Adv_{\SIS}^{\B}\ge \eta.
\]
\end{theorem}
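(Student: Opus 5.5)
The proof is the contrapositive of the hybrid argument in \Cref{thm:hybrid-css-bound}, keeping the first hybrid step explicit instead of bounding it by $\epsilon_{\rm kem}$.

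\textbf{Telescoping the hybrids.} I reuse the games $G_0,\dots,G_6$ from that proof, with $G_0=\Real_{\Pi_{\rm CSS},\A}$ and $G_6$ distributed as $\Ideal_{\mathcal F_{\rm css},\Sim}$ for the simulator built there. By the triangle inequality,
\[
\abs{\Prob[G_0=1]-\Prob[G_6=1]}\le \abs{\Prob[G_0=1]-\Prob[G_1=1]}+\sum_{j=1}^{5}\abs{\Prob[G_j=1]-\Prob[G_{j+1}=1]}.
\]
The last five gaps are bounded exactly as in \Cref{thm:hybrid-css-bound}:
\begin{itemize}
\item $\epsilon_{\rm aead}$ for $G_1\to G_2$;
\item $\epsilon_{\rm zk}$ for $G_2\to G_3$;
\item $\epsilon_{\rm puf}$ for $G_3\to G_4$, via \Cref{lem:leftover};
\item $\epsilon_{\rm tag}$ for $G_4\to G_5$;
\item $\epsilon_{\rm bus}+\epsilon_{\rm st}$ for $G_5\to G_6$.
\end{itemize}
None of these bounds uses lattice hardness. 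So if the environment's advantage exceeds $\epsilon_{\rm aead}+\epsilon_{\rm zk}+\epsilon_{\rm puf}+\epsilon_{\rm tag}+\epsilon_{\rm bus}+\epsilon_{\rm st}+\eta$, then $\abs{\Prob[G_0=1]-\Prob[G_1=1]}>\eta$.

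\textbf{From a KEM distinguisher to a lattice algorithm.} Next I build a KEM distinguisher $\mathcal D$ with advantage greater than $\eta$. Given a challenge $(pk,c^*,K^*)$, $\mathcal D$ runs $\Zed$ and $\A$ internally and simulates every other component itself. It generates PUF helper data, AEAD and tag keys, bus queues, plant evolution and leakage $\ell_k$ from their definitions, none of which needs $sk$. It embeds $(c^*,K^*)$ into the KEM step of the key derivation, and outputs whatever $\Zed$ outputs. A real $K^*$ reproduces $G_0$ and a uniform $K^*$ reproduces $G_1$.

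If $\Zed$ can trigger many refresh epochs, a standard sub-hybrid over the refreshes in $\mathcal K_{\rm ref}$ loses a factor equal to their number $q$. To avoid that loss I will either fix the challenge epoch as in \Cref{def:experiment}, or absorb the factor $q$ into $\eta$.

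Finally, the KEM's security reduction under \Cref{asm:crypto-hardness} turns $\mathcal D$ into an algorithm $\B$ with $\Adv_{\LWE}^{\B}+\Adv_{\SIS}^{\B}\ge\Adv_{\KEM}^{\mathcal D}>\eta$. Here $\B$ uses the $\LWE$ part for key indistinguishability and the $\SIS$ part for any commitment binding that $\Pi_{\rm zk}$ uses. This reduction is treated as a property of the selected lattice KEM.

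\textbf{Main obstacle.} The hard step is simulating without $sk$. The oracle $\mathcal O_{\rm decap}$ and the verification of KEM ciphertexts in $\Delta_{\rm ver}$ would normally need the secret key. Two routes handle this:
\begin{itemize}
\item assume an IND-CCA KEM, such as one built with a Fujisaki--Okamoto transform, so that decapsulation queries are answered by the CCA challenger;
\item or restrict to passive KEM exposure, so that only honestly generated ciphertexts are ever decapsulated and their keys are already known to $\mathcal D$.
\end{itemize}
A further subtlety is leakage. The leakage $\ell_k$ depends on $K_k$, and $\mathcal D$ must compute it from the challenge key $K^*$. The bound $I(\ell_k;K_k\mid\tau_k)\le\beta_K$ then has to be charged to $\epsilon_{\rm puf}$ through the $\Delta H_{\rm ch}$ term of \Cref{lem:leftover}, rather than leaking into the lattice gap. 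I would state both points explicitly as conditions of the reduction.
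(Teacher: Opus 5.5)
Your proposal follows the paper's proof: telescope the hybrids of \Cref{thm:hybrid-css-bound}, deduce that the $G_0\to G_1$ KEM step alone exceeds $\eta$, embed the KEM challenge into the selected epoch, and invoke the lattice KEM's security reduction to obtain $\B$. The caveats you add about multi-epoch loss and the decapsulation oracle are ones the paper silently skips; note, though, that only the fixed-challenge-epoch option gives the bound $\eta$ as stated (absorbing the factor $q$ would give only $\eta/q$), and that the leakage worry is unnecessary, because $\mathcal D$ holds $K^*$ and can compute $\ell_k$ itself in both worlds.
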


\begin{proof}
The hybrid sequence in \Cref{thm:hybrid-css-bound} bounds the total distinguishing advantage by
\[
\epsilon_{\rm kem}+\epsilon_{\rm aead}+\epsilon_{\rm zk}
+\epsilon_{\rm puf}+\epsilon_{\rm tag}+\epsilon_{\rm bus}+\epsilon_{\rm st}.
\]
If the environment's advantage is larger than all non-KEM terms plus $\eta$, then the KEM replacement step must contribute more than $\eta$; otherwise the sum of all hybrid gaps would be at most the environment's stated threshold without the excess. A distinguisher for the KEM replacement embeds the KEM challenge into the selected epoch and simulates all other layers using their real algorithms and ideal gates. By the security reduction for the lattice-based KEM, a KEM distinguisher with advantage greater than $\eta$ yields an algorithm solving the underlying decisional LWE or SIS-related distinguishing task with advantage at least $\eta$ up to the reduction loss included in $\B$'s running time. Therefore $\Adv_{\LWE}^{\B}+\Adv_{\SIS}^{\B}\ge\eta$. The excess distinguishing assumption supplies the reduction premise, and lattice hardness supplies the contradiction if $\eta$ is non-negligible.
\end{proof}

\begin{proposition}[Lower bound on renewal frequency under leakage rate]\label{prop:renewal-lower-bound}
Let conditional key entropy decrease at rate at least $\underline\ell>0$ bits per second under a fixed stress regime:
\[
\Hsmooth(K_t\mid Z_t)\le \Hsmooth(K_0\mid Z_0)-\underline\ell t.
\]
To maintain $\Hsmooth(K_t\mid Z_t)\ge \kappa_{\min}$ for all $t\in[0,T_{\rm key}]$, every scheme must satisfy
\[
T_{\rm key}\le
\frac{\Hsmooth(K_0\mid Z_0)-\kappa_{\min}}{\underline\ell}.
\]
\end{proposition}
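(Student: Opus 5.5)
The plan is a direct argument by evaluation at the endpoint of the renewal interval. Write $H_0=\Hsmooth(K_0\mid Z_0)$ for brevity. The requirement is that $\Hsmooth(K_t\mid Z_t)\ge\kappa_{\min}$ for every $t\in[0,T_{\rm key}]$. In particular it must hold at $t=T_{\rm key}$, since the interval is closed.

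Combining this with the assumed leakage-rate upper bound gives the chain
\[
\kappa_{\min}\le \Hsmooth(K_{T_{\rm key}}\mid Z_{T_{\rm key}})\le H_0-\underline\ell\,T_{\rm key}.
\]
Rearranging yields $\underline\ell\,T_{\rm key}\le H_0-\kappa_{\min}$. Dividing by $\underline\ell>0$, which preserves the inequality direction, gives the stated bound.

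The phrase ``every scheme'' is handled by noting that the argument uses nothing about the scheme beyond the hypothesized degradation inequality under the fixed stress regime. Hence the bound is a necessary condition, uniform over all renewal policies.

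Two small points deserve a remark rather than real work.

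\begin{itemize}
\item If $H_0<\kappa_{\min}$, the condition already fails at $t=0$. In that case no nonnegative $T_{\rm key}$ is admissible, consistent with the right-hand side being negative.
\item The bound is sharp when the degradation holds with equality. In that case $\Hsmooth(K_t\mid Z_t)$ is exactly linear and reaches $\kappa_{\min}$ precisely at the stated horizon.
\end{itemize}

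I would also note the connection to the construction. Taking $\underline\ell=\dot\ell_{\rm side}+\dot\ell_{\rm vib}\abs{\delta_{\rm tc}}+\dot\ell_{\rm ch}(\rho_k)$ and $H_0=\kappa_{\rm target}$ recovers $T_{{\rm key},k}$ from the key-refresh paragraph. This shows that the radar-aware renewal period is the largest horizon compatible with this lower bound.

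There is no genuine obstacle here. The only care needed is in the quantifier structure: the bound follows from the single endpoint instance, so the supremum over the interval is never required.
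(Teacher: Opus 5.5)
Your proof is correct and matches the paper's own argument: you evaluate the decay hypothesis at the endpoint $t=T_{\rm key}$, chain it with the requirement $\Hsmooth(K_{T_{\rm key}}\mid Z_{T_{\rm key}})\ge\kappa_{\min}$, and solve for $T_{\rm key}$ using $\underline\ell>0$. Your side remarks are also consistent with the paper: the $H_0<\kappa_{\min}$ case, sharpness under exactly linear decay, and recovery of $T_{{\rm key},k}$ from the key-refresh paragraph.
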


\begin{proof}
At $t=T_{\rm key}$, the assumed entropy decay gives
\[
\Hsmooth(K_{T_{\rm key}}\mid Z_{T_{\rm key}})
\le
\Hsmooth(K_0\mid Z_0)-\underline\ell T_{\rm key}.
\]
Requiring the left side to be at least $\kappa_{\min}$ yields
\[
\kappa_{\min}\le \Hsmooth(K_0\mid Z_0)-\underline\ell T_{\rm key}.
\]
Solving for $T_{\rm key}$ gives the stated upper bound. The leakage-rate assumption supplies the lower-bound argument on renewal frequency, and the entropy threshold supplies the security conclusion.
\end{proof}

\begin{counterexample}[Separation between cryptographic authentication and safe composability]\label{cx:separation}
Consider a system with perfect tag unforgeability $\epsilon_{\rm tag}=0$, ideal KEM secrecy $\epsilon_{\rm kem}=0$, and no leakage $\epsilon_{\rm puf}=0$. Let the bus response time be
\[
R_i=D_i+\varepsilon_R
\]
for some $\varepsilon_R>0$, caused solely by ciphertext expansion and command-response blocking. Then every received command can be authentic while no control release is composable with $\mathcal F_{\rm css}$.
\end{counterexample}

\begin{proof}
Since $\epsilon_{\rm tag}=\epsilon_{\rm kem}=\epsilon_{\rm puf}=0$, cryptographic acceptance and secrecy are ideal in the isolated protocol sense. However, the schedulability predicate in \Cref{thm:schedulability} requires $R_i+\Delta_{\rm ver}+\Delta_T\le D_i$. Because $R_i=D_i+\varepsilon_R$ and $\Delta_{\rm ver},\Delta_T\ge0$, we have
\[
R_i+\Delta_{\rm ver}+\Delta_T\ge D_i+\varepsilon_R>D_i.
\]
Thus the ideal co-design functionality refuses the release. This proves a strict separation: isolated cryptographic security does not imply safety-critical composability when real-time constraints fail. The authentication assumptions supply cryptographic acceptance, while the response-time assumption supplies the failure envelope.
\end{proof}

\section{Aero-Propulsive Cyber-Physical Integration}

\subsection*{Notation and symbol definitions}
Let $e_{\rm EGT,k}$ be exhaust-gas-temperature residual, $\mathrm{EPR}_k$ engine pressure ratio, $v_k$ vibration telemetry, $a^{\rm INS}_k$ inertial-navigation residual, and $h^{\rm RA}_k$ radar-altimeter measurement. Let the sensor-fusion innovation vector be
\[
r_k=\col(r_k^N,r_k^{T},r_k^{\pi},r_k^{\rm vib},r_k^{\rm EGT},r_k^{\rm INS},r_k^{\rm RA}).
\]
Let the Markov jump uncertainty mode $\rho_k$ influence both physical sensor noise and channel entropy. Let $P_\rho=[p_{ab}]$ be its transition matrix.

\subsection*{Assumption set}
\begin{assumption}[Markov jump channel-plant coupling]\label{asm:markov-jump}
For each mode $\rho=a$, the plant and channel have matrices $(A_a,B_a,G_a,H_a,S_a)$ and leakage-rate coefficient $\dot\ell_a$. The mode process is Markov with transition matrix $P_\rho$. The controller observes an authenticated estimate of $\rho_k$ only through delayed transcripts.
\end{assumption}

\begin{assumption}[Sensor-fusion integrity covariance]\label{asm:sensor-covariance}
There exist $0<s_{\min}\le s_{\max}<\infty$ such that
\[
s_{\min}I\preceq S_k\preceq s_{\max}I
\]
for all admissible stress regimes.
\end{assumption}

\subsection*{Formal model}
The coupled residual gate is
\[
\mathcal G_k=
\left\{
\norm{r_k}_{S_k^{-1}}\le\eta_k,\
\abs{e_{\rm EGT,k}}\le e_{\max},\
\abs{\mathrm{EPR}_k-\widehat{\mathrm{EPR}}_k}\le e_{\rm EPR},\
\abs{h^{\rm RA}_k-\hat h^{\rm INS}_k}\le e_{\rm RA}
\right\}.
\]
The vibration leakage model is
\[
I(K_k;Z_k^{\rm vib}\mid \delta_{\rm tc,k})
\le \ell_{\rm vib,0}+\ell_{\rm vib,1}\abs{\delta_{\rm tc,k}}\norm{v_k}^2.
\]
The blade-tip clearance perturbation therefore enters the leakage budget and not merely the mechanical model.

\subsection*{Inference chain}
\[
\delta_{\rm tc,k}\uparrow
\Rightarrow
I(K_k;Z_k^{\rm vib})\uparrow
\Rightarrow
\Hsmooth(K_k\mid Z_k)\downarrow
\Rightarrow
T_{{\rm key},k}\downarrow.
\]
Similarly,
\[
M_{s,k}\downarrow
\Rightarrow
\eta_k\downarrow
\Rightarrow
\Prob[\text{nominal alarm}]\uparrow
\quad\text{and}\quad
\Prob[\text{unsafe accepted telemetry}]\downarrow
\]
within the Gaussian residual abstraction.

\subsection*{Boundary conditions}
If the radar-altimeter and inertial-navigation residuals disagree outside the declared integrity threshold, the construction treats this as an integrity event, not as a channel-estimation opportunity. If EGT residuals exceed their envelope, key refresh cannot compensate for physical uncertifiability.

\subsection*{Technical comment}
The integration is deliberately conservative. The model permits cryptography to restrict command release, but never permits cryptographic success to override physical residual alarms.

\begin{theorem}[Markov jump small-gain stability with authenticated latency]\label{thm:markov-small-gain}
Assume that for every mode $a\in\mathcal R$ there is $P_a\succ0$ and constants $\alpha_a,\gamma_a>0$ such that
\[
A_{c,a}^{\trans}P_a+P_aA_{c,a}\preceq -\alpha_a P_a,
\qquad
\norm{P_aG_a}\le \gamma_a,
\]
and that mode jumps satisfy
\[
\sum_b p_{ab}P_b\preceq (1+\zeta_a)P_a.
\]
If authenticated latency produces perturbation gain $g_\delta(\delta_k)$ and
\[
\max_a\left((1+\zeta_a)e^{-\alpha_a h_k}+g_\delta(\delta_k)\gamma_a\right)<1,
\]
then the Markov jump closed-loop system is mean-square stable under authenticated command release.
\end{theorem}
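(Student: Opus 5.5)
The plan is a stochastic Lyapunov argument with the mode-dependent quadratic function $V(x,a)=x^{\trans}P_ax$, evaluated only at the sampling instants $t_k$. The goal is a one-step contraction
\[
\E\bigl[V(x_{k+1},\rho_{k+1})\mid x_k,\rho_k=a\bigr]\le \beta\,V(x_k,a),
\qquad
\beta:=\max_a\bigl((1+\zeta_a)e^{-\alpha_a h_k}+g_\delta(\delta_k)\gamma_a\bigr)<1 .
\]
Iterating this gives $\E V(x_k,\rho_k)\le\beta^k\E V(x_0,\rho_0)$. The uniform bounds $\min_a\lambda_{\min}(P_a)\norm{x}^2\le V\le\max_a\lambda_{\max}(P_a)\norm{x}^2$ then give exponential mean-square stability. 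Finiteness of $\mathcal R$ makes these bounds uniform.

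\textbf{Step 1 (intersample flow).} Within $[t_k,t_{k+1})$ the mode is frozen at $a$. The nominal closed loop $\dot x=A_{c,a}x$ together with $A_{c,a}^{\trans}P_a+P_aA_{c,a}\preceq-\alpha_aP_a$ gives $\tfrac{d}{dt}V\le-\alpha_aV$. By Gronwall, as in the proof of \Cref{thm:latency-iss}, the nominal part of $x_{k+1}$ satisfies $V(\Phi_a x_k,a)\le e^{-\alpha_ah_k}V(x_k,a)$, where $\Phi_a=e^{A_{c,a}h_k}$.

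\textbf{Step 2 (mode jump).} Condition on $\rho_k=a$ and average over $\rho_{k+1}$ using $P_\rho$. The jump condition $\sum_bp_{ab}P_b\preceq(1+\zeta_a)P_a$ yields
\[
\E\bigl[z^{\trans}P_{\rho_{k+1}}z\mid\rho_k=a\bigr]\le(1+\zeta_a)\,z^{\trans}P_az
\]
for any $\mathcal F_k$-measurable $z$. Combined with Step 1, this produces the term $(1+\zeta_a)e^{-\alpha_ah_k}V(x_k,a)$.

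\textbf{Step 3 (latency perturbation).} The authenticated release delays the input by $\delta_k=R_k+\Delta_{\rm ver}+\Delta_T$. I write the delay-induced deviation as the input channel $G_a w_k$, with $w_k$ bounded in the $P_a$-weighted sense by $g_\delta(\delta_k)\norm{x_k}$. Here $g_\delta$ plays the role of the operator-gain bound from \Cref{lem:torque-delay}. The cross term and perturbation term in $V$ then scale with $\norm{P_aG_a}\le\gamma_a$. Summing with Step 2 gives the contraction with factor $\beta$.

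\textbf{Main obstacle.} The hard part is Step 3: making the perturbation enter \emph{linearly} as $g_\delta\gamma_a$ rather than as a quadratic form such as $\gamma_a^2g_\delta^2$ plus a cross term. Naively expanding $(\Phi_ax+G_aw)^{\trans}P_b(\Phi_ax+G_aw)$ produces a cross term $2x^{\trans}\Phi_a^{\trans}P_bG_aw$. It also leaves a mismatch between $\norm{P_aG_a}$ and $\lambda_{\min}(P_a)$.

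I would first try the following normalisation. Interpret $g_\delta(\delta_k)$ as already absorbing these constants, i.e.\ define it as the gain for which
\[
\E\bigl[\Delta V_{\rm pert}\mid x_k,\rho_k=a\bigr]\le g_\delta(\delta_k)\gamma_aV(x_k,a).
\]
This is the small-gain reading, in which $g_\delta$ multiplies the Lyapunov-weighted input gain. If that is too generous, the fallback is to use Young's inequality on the cross term. That gives the weaker factor $(1+\epsilon)(1+\zeta_a)e^{-\alpha_ah_k}+(1+\epsilon^{-1})g_\delta\gamma_a\,c$. The strict inequality in the hypothesis leaves room to pick $\epsilon$ small, provided $g_\delta$ is stated with a compatible normalisation. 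Either way I would record explicitly which normalisation of $g_\delta$ is being assumed, since the theorem leaves it implicit.

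Finally, the additive Wiener term $G_{\rho,\phi}\dd W_t$ from \Cref{asm:local-envelope} only adds a bounded constant to the recursion. It therefore yields mean-square boundedness, or stability in the noise-free part, by the standard geometric-series argument.
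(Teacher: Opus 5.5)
Your main line is the paper's proof. Both use the mode-dependent quadratic $V_k=x_k^{\trans}P_{\rho_k}x_k$, flow contraction by $e^{-\alpha_ah_k}$ in the $P_a$ metric, jump averaging through $\sum_bp_{ab}P_b\preceq(1+\zeta_a)P_a$, an additive latency term $g_\delta(\delta_k)\gamma_aV_k$, and then the maximum over modes and the iteration $\E[V_k]\le\beta^kV_0$. The paper disposes of your Step 3 obstacle in effect by your first normalisation, but tacitly. It bounds the perturbation by $g_\delta(\delta_k)\gamma_a\norm{x_k}^2$ and then enters it as $g_\delta(\delta_k)\gamma_aV_k$. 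In doing so it drops the cross term, omits the $(1+\zeta_a)$ factor from the jump, and ignores the factor $1/\lambda_{\min}(P_a)$ needed to pass from $\norm{x_k}^2$ to $V_k$. Stating that reading of $g_\delta$ as an explicit hypothesis, as you propose, is what the argument actually requires. Your remark that the Wiener term yields only mean-square boundedness is also correct and is absent from the paper.

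The Young's-inequality fallback, however, would fail as described. Shrinking $\epsilon$ inflates the factor $(1+\epsilon^{-1})$ on the perturbation, so the slack in the hypothesis cannot be spent that way. Take $A_a=(1+\zeta_a)e^{-\alpha_ah_k}$ and $B_a=c\,g_\delta(\delta_k)\gamma_a$, as in your expression. Minimising $(1+\epsilon)A_a+(1+\epsilon^{-1})B_a$ over $\epsilon>0$ gives the requirement $(\sqrt{A_a}+\sqrt{B_a})^2<1$, whose cross term $2\sqrt{A_aB_a}$ dominates $B_a$ when the perturbation is small. For any fixed $c>0$ this is not implied by $A_a+g_\delta(\delta_k)\gamma_a<1$. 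Take $A_a=1-2\epsilon_0$ and $g_\delta(\delta_k)\gamma_a=\epsilon_0$ with $0<\epsilon_0<\min\{c/4,1/2\}$. Then $\sqrt{A_a}+\sqrt{B_a}>1-2\epsilon_0+2\epsilon_0=1$ while the linear hypothesis holds. Since the Young bound is sharp for perturbations aligned with the nominal motion, the linear condition cannot be recovered along this route. Only the absorbing normalisation delivers the theorem as stated.
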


\begin{proof}
Let $V_k=x_k^{\trans}P_{\rho_k}x_k$. Conditional on $\rho_k=a$, the continuous evolution over one interval contributes a contraction no larger than $e^{-\alpha_a h_k}$ in the $P_a$ metric and a latency-induced perturbation bounded by $g_\delta(\delta_k)\gamma_a\norm{x_k}^2$ through the induced gain assumption. At the jump, the expected next Lyapunov matrix satisfies
\[
\E[P_{\rho_{k+1}}\mid \rho_k=a]=\sum_b p_{ab}P_b\preceq (1+\zeta_a)P_a.
\]
Thus
\[
\E[V_{k+1}\mid x_k,\rho_k=a]
\le
\left((1+\zeta_a)e^{-\alpha_a h_k}+g_\delta(\delta_k)\gamma_a\right)V_k.
\]
Taking the maximum over modes gives a scalar contraction factor $\beta<1$. Iterating,
\[
\E[V_k]\le \beta^k V_0,
\]
which implies mean-square stability because each $P_a\succ0$ provides a norm-equivalent quadratic form. The Markov jump assumption supplies the jump factor, the Lyapunov inequalities supply contraction, and authenticated latency supplies the perturbation gain in the small-gain condition.
\end{proof}

\begin{corollary}[Turbine torque delay and authenticated release]\label{cor:torque-delay-release}
If $g_\delta(\delta)=\bar g_T\delta$ with $\bar g_T>0$, then the sufficient delay bound for \Cref{thm:markov-small-gain} is
\[
\delta_k<
\min_a
\frac{1-(1+\zeta_a)e^{-\alpha_a h_k}}{\bar g_T\gamma_a}.
\]
\end{corollary}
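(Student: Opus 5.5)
The corollary follows from specializing the small-gain condition of \Cref{thm:markov-small-gain} to the linear gain $g_\delta(\delta)=\bar g_T\delta$ and solving it for $\delta_k$, mode by mode. The theorem's hypothesis is
\[
\max_a\bigl((1+\zeta_a)e^{-\alpha_a h_k}+g_\delta(\delta_k)\gamma_a\bigr)<1,
\]
and a maximum over the finite mode set $\mathcal R$ is below one if and only if every term is below one. So I will show that the stated bound on $\delta_k$ forces, for each mode $a$,
\[
(1+\zeta_a)e^{-\alpha_a h_k}+\bar g_T\gamma_a\delta_k<1 .
\]

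First I fix a mode $a$ and rearrange this inequality. Because $\bar g_T>0$ and $\gamma_a>0$, dividing by $\bar g_T\gamma_a$ preserves the inequality, and the condition becomes
\[
\delta_k<\frac{1-(1+\zeta_a)e^{-\alpha_a h_k}}{\bar g_T\gamma_a}.
\]
Next I take the minimum over $a$. Since $\mathcal R$ is finite, the minimum is attained, and $\delta_k$ below it satisfies every per-mode inequality at once. Hence the maximum in \Cref{thm:markov-small-gain} is strictly less than one, and that theorem gives mean-square stability.

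The step that needs care is the sign of the numerators. If $(1+\zeta_a)e^{-\alpha_a h_k}\ge 1$ for some mode $a$, the right-hand side is nonpositive. Since $\delta_k\ge0$, the bound then cannot be met, so the corollary is vacuous rather than false. I will say explicitly that the bound is only meaningful when each numerator is positive, i.e.\ when $h_k>\ln(1+\zeta_a)/\alpha_a$ for all $a$. Sufficiency holds in every case, so no extra hypothesis is needed.

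Finally, I will remark that the corollary only sharpens the sufficient condition of \Cref{thm:markov-small-gain}. It does not make the delay bound necessary for stability.
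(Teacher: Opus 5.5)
Your proposal is correct and follows essentially the same route as the paper: substitute $g_\delta(\delta)=\bar g_T\delta$ into the per-mode small-gain inequality of \Cref{thm:markov-small-gain}, divide by $\bar g_T\gamma_a>0$, and take the minimum over modes. Your extra remarks are accurate refinements that the paper leaves implicit: the bound is vacuous when $(1+\zeta_a)e^{-\alpha_a h_k}\ge 1$, and the max-to-min step uses finiteness of the mode set.
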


\begin{proof}
The small-gain condition in \Cref{thm:markov-small-gain} becomes
\[
(1+\zeta_a)e^{-\alpha_a h_k}+\bar g_T\delta_k\gamma_a<1
\]
for each mode $a$. Solving for $\delta_k$ gives
\[
\delta_k<\frac{1-(1+\zeta_a)e^{-\alpha_a h_k}}{\bar g_T\gamma_a}.
\]
The bound must hold for all modes, hence the minimum. The transfer-function latency assumption supplies the safe control-delay conclusion.
\end{proof}

\paragraph{Leakage-to-control closure.}
In a more detailed interconnection view, leakage does not need to forge a valid transcript in order to erode stability reserve. If the estimation error driven by leakage satisfies
\[
\norm{e_k^{\rm leak}}\le \gamma_L\norm{x_k}+\bar e_0,
\qquad
\norm{\Delta u_k^{\rm leak}}\le \gamma_D\norm{e_k^{\rm leak}}+\bar u_0,
\]
then
\[
\norm{\Delta u_k^{\rm leak}}\le \gamma_D\gamma_L\norm{x_k}+\gamma_D\bar e_0+\bar u_0.
\]
Accordingly, the effective contraction test may be read as
\[
\beta_{\rm eff}=
\max_a\left((1+\zeta_a)e^{-\alpha_a h_k}+g_\delta(\delta_k)\gamma_a\right)+\gamma_D\gamma_L<1.
\]
This auxiliary form does not replace the authenticated-latency theorem, but it makes explicit that residual leakage and delay belong to the same small-gain budget once both are converted into control perturbations.

\section{Analytic Evaluation}

\subsection*{Notation and symbol definitions}
This section evaluates the closed-form bounds symbolically. Let
\[
\Theta=(V_\Sigma,A_D,\delta_{\rm tc},\Gamma_s,M_s,\dot N_H,L_{\rm kem},s_w,\sigma_N)
\]
be the parameter vector. Let $\mathcal B_{\rm sec}(\Theta)$ denote the composable security bound, $\mathcal B_{\rm rt}(\Theta)$ the response-time slack, and $\mathcal B_{\rm st}(\Theta)$ the stability margin.

\subsection*{Assumption set}
All expressions use the local envelopes already stated. The evaluation is theorem-driven: it compares symbolic parameter regimes, not measurements from a real engine or aircraft.

\subsection*{Formal model}
Define
\[
\mathcal B_{\rm sec}(\Theta)=
\epsilon_{\rm kem}+\epsilon_{\rm aead}+\epsilon_{\rm zk}+\epsilon_{\rm tag}
+\varepsilon+\frac12 2^{-(\mu_{\rm puf}-\ell_{\rm side}-\ell_{\rm vib}\abs{\delta_{\rm tc}}-\Delta H_{\rm ch}-\kappa)/2}.
\]
Define
\[
\mathcal B_{\rm rt}(\Theta)=
D_i-\Delta_{\rm ver}-\Delta_T-
\left[
C_i+B_i+\sum_{j\in hp(i)}
\left\lceil\frac{R_i+J_j}{P_j}\right\rceil C_j
\right],
\]
where $C_i$ includes $L_{\rm kem}/B_{\rm bus}$. Define
\[
\mathcal B_{\rm st}(\Theta)=
\frac{c_3}{c_2}-\frac{\alpha_1}{c_1}(R_i+\Delta_{\rm ver}+\Delta_T)-\frac{\alpha_2}{c_1}s_w.
\]

\subsection*{Worst-case certification functional}
Let the admissible symbolic uncertainty set be
\[
\U_{\rm eval}=\left\{\Theta:
\begin{array}{l}
0\le V_\Sigma\le \bar V_\Sigma,\qquad
0\le A_D\le \bar A_D,\qquad
\abs{\delta_{\rm tc}}\le \bar \delta_{\rm tc},\\[2pt]
\underline \Gamma_s\le \Gamma_s\le \bar \Gamma_s,\qquad
\underline L_{\rm kem}\le L_{\rm kem}\le \bar L_{\rm kem},\qquad
0\le s_w\le \bar s_w
\end{array}
\right\}.
\]
Define the rejection-oriented certification functional
\[
\mathcal C_{\rm cert}(\Theta)=
\mathcal B_{\rm sec}(\Theta)
+\mathbf 1\{\mathcal B_{\rm rt}(\Theta)<0\}
+\mathbf 1\{\mathcal B_{\rm st}(\Theta)\le 0\}
+\mathbf 1\{\Hsmooth(K_k\mid Z_k)<\kappa_{\min}\}.
\]
A parameter vector is therefore certifiable only if the indicator terms vanish and the residual security contribution remains below the target envelope. This functional does not optimize performance; it records which component of the coupled certificate fails first as ciphertext size, radar uncertainty, shaft compliance, or saturation load increase.

\subsection*{Inference chain: closed-form sensitivity}
For the leakage component,
\[
\frac{\partial \mathcal B_{\rm sec}}{\partial \abs{\delta_{\rm tc}}}
=
\frac{\ln 2}{4}\ell_{\rm vib}
2^{-(\mu_{\rm puf}-\ell_{\rm side}-\ell_{\rm vib}\abs{\delta_{\rm tc}}-\Delta H_{\rm ch}-\kappa)/2}>0.
\]
For radar uncertainty entering $\Delta H_{\rm ch}=\zeta_\Sigma V_\Sigma T+\zeta_DA_DT$,
\[
\frac{\partial \mathcal B_{\rm sec}}{\partial V_\Sigma}
=
\frac{\ln 2}{4}\zeta_\Sigma T
2^{-(\mu_{\rm puf}-\ell_{\rm side}-\ell_{\rm vib}\abs{\delta_{\rm tc}}-\Delta H_{\rm ch}-\kappa)/2}>0.
\]
For ciphertext expansion,
\[
\frac{\partial \mathcal B_{\rm st}}{\partial L_{\rm kem}}
=
-\frac{\alpha_1}{c_1B_{\rm bus}}
\left(1+\sum_{j\in hp(i)}\frac{\partial I_j}{\partial R_i}C_j\right),
\]
where $I_j=\lceil(R_i+J_j)/P_j\rceil$ is piecewise constant with jumps at interference boundaries. Away from jumps,
\[
\frac{\partial \mathcal B_{\rm st}}{\partial L_{\rm kem}}
=-\frac{\alpha_1}{c_1B_{\rm bus}}.
\]

\subsection*{Boundary conditions}
The sensitivity derivatives are exact inside regions where ceiling terms are constant and thresholds are differentiable. At bus interference discontinuities, one-sided differences replace derivatives.

\subsection*{Technical comment}
The tightest symbolic couplings occur at discontinuities: a single extra ciphertext word can force one additional higher-priority interference instance, producing a step decrease in timing and stability margins.

\subsection*{Symbolic parameter sweep}
The closed-form bounds imply a consistent monotonic picture across the operating variables. Increasing $V_\Sigma$ raises $\Delta H_{\rm ch}$ and decreases $T_{\rm key}$, so entropy-driven key refresh must accelerate; if spool-synchronous renewal is imposed, increasing $f_{H,k}$ also decreases $T_{{\rm sync},k}$ and can further tighten the enforced horizon $\widehat T_{{\rm key},k}$. Under the adopted logarithmic channel-capacity abstraction, the dependence on $V_\Sigma$ is tight. Increasing $A_D(\nu)$ decreases $C_{\A}$ and may also increase $\dot\ell_{\rm ch}$ when attenuation raises the uncertainty rate, so adversarial capacity falls even though conservative entropy accounting can still force shorter renewal periods. Increasing $\sigma_N$ raises $p_{\rm fr}^{N}$, and the resulting Gaussian-tail estimate is exponentially tight in its scale parameter.

The timing and control variables exhibit the same coupled structure. Increasing $L_{\rm kem}$ increases both $C_i$ and $R_i$, thereby shrinking bus slack and latency margin; away from ceiling discontinuities, this dependence is exact. Increasing $\Gamma_s$ yields $h_k^{\max}\propto\sqrt{\Gamma_s}$ within the local torsional-oscillator approximation, so the authenticated sampling interval can grow when the effective torsional frequency decreases. Increasing $M_s^{-1}$ reduces $\eta_k$, making integrity alarms more conservative near surge, and increasing $\abs{\delta_{\rm tc}}$ enlarges the vibration leakage budget and worsens the PUF/key distinguishing bound, albeit conservatively when vibration leakage is overestimated. Finally, increasing $s_w$ decreases $\mu_{\rm lat}$ and therefore reduces the admissible delay in the affine Lyapunov-margin model.

\subsection*{Ablation-style interpretation}
The same conclusions can be read as an argument against decoupled modeling. A cryptography-only telemetry model that omits $R_i$, $W_{\rm act}$, and $\mu_{\rm lat}$ can authenticate a command that still misses a safety-critical deadline; this is exactly the setting of \Cref{cx:separation}. A control-only FADEC model that omits $\Hsmooth$, $\epsilon_{\rm tag}$, and $\epsilon_{\rm kem}$ can remain dynamically stable while accepting an unauthenticated or entropy-depleted command, as formalized in \Cref{thm:hybrid-css-bound}. A bus-only schedulability analysis that omits $\delta_{\rm tc}$, $M_s$, and $\Gamma_s$ can satisfy deadlines while violating leakage or torsional sampling constraints, in the sense of \Cref{lem:torsional-sampling,lem:surge-alarm}.

Likewise, a radar-only channel model that omits $T_{\rm key}$ and $\Delta H_{\rm ch}$ cannot infer a renewal policy from capacity estimates alone, as shown in \Cref{lem:doppler-capacity,lem:key-renewal-radar}. A residual-only integrity gate without $\EUFCMA$ security and transcript binding can bound estimation error without establishing data origin, as formalized in \Cref{prop:auth-failure}. The manuscript therefore uses the coupled model not for stylistic completeness, but because each simplified family suppresses a failure mode that remains visible in the full security-stability-latency formulation.

\subsection*{Symbolic stress-regime analysis}
Define four regimes:
\[
\begin{array}{ll}
\mathcal R_0: & V_\Sigma,A_D,\abs{\delta_{\rm tc}},s_w \text{ small and } M_s \text{ large},\\
\mathcal R_1: & V_\Sigma \text{ large, } A_D \text{ moderate, entropy renewal dominated},\\
\mathcal R_2: & L_{\rm kem}/B_{\rm bus} \text{ near an interference jump},\\
\mathcal R_3: & M_s \downarrow M_{\min},\ s_w\uparrow,\ \Gamma_s \text{ large}.
\end{array}
\]
In $\mathcal R_0$, all bounds are dominated by cryptographic assumptions:
\[
\mathcal B_{\rm sec}\approx\epsilon_{\rm kem}+\epsilon_{\rm aead}+\epsilon_{\rm zk}+\epsilon_{\rm tag}.
\]
In $\mathcal R_1$, key renewal dominates:
\[
T_{\rm key}\sim(\zeta_\Sigma V_\Sigma)^{-1}.
\]
In $\mathcal R_2$, response time is discontinuous:
\[
R_i(L_{\rm kem}^{+})-R_i(L_{\rm kem}^{-})\in
\left\{0,\sum_{j\in hp(i)}m_jC_j:m_j\in\N\right\}.
\]
In $\mathcal R_3$, stability dominates:
\[
\mathcal B_{\rm st}\approx
\frac{c_3}{c_2}
-\frac{\alpha_1}{c_1}(R_i+\Delta_{\rm ver}+\Delta_T)
-\frac{\alpha_2}{c_1}s_w,
\]
and surge conservatism enforces $\eta_k\downarrow$.

\begin{theorem}[Combined security-stability-latency feasibility envelope]\label{thm:combined-envelope}
Let all previous assumptions hold. A sufficient condition for composable secure release, residual integrity, bus schedulability, and input-to-state stability at epoch $k$ is
\[
\begin{cases}
\mathcal B_{\rm sec}(\Theta)\le \epsilon_\star,\\
R_i+\Delta_{\rm ver}+\Delta_T\le D_i,\\
R_i+\Delta_{\rm ver}+\Delta_T\le W_{{\rm act},k},\\
\mu_{\rm lat}(R_i+\Delta_{\rm ver}+\Delta_T)>0,\\
\norm{r_k}_{S_k^{-1}}\le\eta_k,\\
\Hsmooth(K_k\mid Z_k)\ge \kappa_{\min}.
\end{cases}
\]
Under these inequalities, the accepted epoch has distinguishing advantage at most $\epsilon_\star+\epsilon_{\rm bus}+\epsilon_{\rm st}$ and satisfies the ISS bound of \Cref{thm:latency-iss}.
\end{theorem}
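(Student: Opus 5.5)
The proof is an assembly argument: each inequality in the displayed system discharges one hypothesis of an earlier result, and the conclusions are combined conjunctively. I will treat the six conditions in the order security, scheduling, actuation window, stability, residual gate and entropy. The first step starts from \Cref{thm:hybrid-css-bound}, whose right-hand side is $\epsilon_{\rm kem}+\epsilon_{\rm aead}+\epsilon_{\rm zk}+\epsilon_{\rm puf}+\epsilon_{\rm tag}+\epsilon_{\rm bus}+\epsilon_{\rm st}$. Comparing with the definition of $\mathcal B_{\rm sec}(\Theta)$, the first five terms, with $\epsilon_{\rm puf}$ expanded through \Cref{lem:leftover}, are exactly $\mathcal B_{\rm sec}(\Theta)$. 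The first condition $\mathcal B_{\rm sec}(\Theta)\le\epsilon_\star$ therefore gives the distinguishing bound $\epsilon_\star+\epsilon_{\rm bus}+\epsilon_{\rm st}$ directly. The entropy condition $\Hsmooth(K_k\mid Z_k)\ge\kappa_{\min}$ is what keeps the epoch inside the admissibility predicate $\Psi_k$. It ensures the key used in hybrid $G_4$ is the one the leftover bound is applied to, so the $\epsilon_{\rm puf}$ term is legitimately invoked at this epoch.

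Next, conditions two and three are the timing gate. Condition two, $R_i+\Delta_{\rm ver}+\Delta_T\le D_i$, is the first component of \Cref{thm:schedulability} and gives bus schedulability. Condition three bounds the same total latency by $W_{{\rm act},k}$. Unfolding \Cref{def:actuation-window}, this yields the spool-acceleration, fuel-flow and surge-margin inequalities, and these are the remaining components used in the proof of \Cref{thm:schedulability}. Together with the residual gate $\norm{r_k}_{S_k^{-1}}\le\eta_k$, these conditions make the release predicate of $\Pi_{\rm CSS}$ true. So the real protocol releases exactly when $\mathcal F_{\rm css}$ admits the command, and hybrid step $G_5\to G_6$ incurs no cost beyond the $\epsilon_{\rm bus}$ and $\epsilon_{\rm st}$ already accounted for. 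I will phrase this as: on the event that all gates hold, the two worlds release identical commands. Residual integrity is then immediate, because the gate is one of the hypotheses, and the acceptance probability is controlled by \Cref{lem:kalman-threshold}.

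For stability I will set $\delta_k=R_i+\Delta_{\rm ver}+\Delta_T$ and invoke \Cref{thm:latency-iss}. Its only hypothesis beyond the standing Lyapunov assumption is $\mu_{\rm lat}(\delta_k)>0$, which is condition four. The ISS bound then follows verbatim.

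\textbf{Main obstacle.} The hardest step is the bookkeeping in the first step. \Cref{thm:hybrid-css-bound} is stated uniformly over the execution, whereas here the claim is per accepted epoch. I will handle this by conditioning on the acceptance event, on which every predicate used by the hybrids holds. The risk is that conditioning could inflate the advantage. To avoid that, I will argue at the level of the unconditional hybrid sum, treating the gates as part of the environment's view, so that the bound is not divided by an acceptance probability. A secondary point needs a remark. The ISS statement requires $\delta_k$ to bound the delay along the trajectory and not just at one epoch; I will interpret the theorem under the standing local-envelope assumption that the same inequalities hold on the interval under consideration.
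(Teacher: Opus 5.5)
Your proposal is correct and follows essentially the same assembly route as the paper's proof: \Cref{thm:hybrid-css-bound} together with condition one gives the advantage bound, \Cref{thm:schedulability} handles timing, condition four feeds \Cref{thm:latency-iss}, and \Cref{lem:kalman-threshold} covers the residual gate. Your explicit identification of the first five hybrid terms with $\mathcal B_{\rm sec}(\Theta)$, and your remarks on unconditional versus per-epoch bounds and on the delay holding along the whole trajectory, make precise what the paper leaves implicit. One small correction: $W_{{\rm act},k}$ contains no torsional-sampling term, so condition three does not recover all remaining components of \Cref{thm:schedulability}; this is harmless, since among the timing properties the statement claims only bus schedulability.
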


\begin{proof}
The first and last inequalities bound the composable cryptographic advantage through \Cref{thm:hybrid-css-bound} and \Cref{lem:leftover}. The second and third inequalities imply deadline and actuation-window admissibility through \Cref{thm:schedulability}. The fourth inequality is the positive latency margin required in \Cref{thm:latency-iss}. The fifth inequality is the residual integrity gate controlled by \Cref{lem:kalman-threshold}. Combining these statements gives an accepted epoch whose real-versus-ideal distinguishing advantage is bounded by the cryptographic target plus the residual bus and stability failure probabilities. Since $\mu_{\rm lat}>0$, \Cref{thm:latency-iss} supplies the ISS state bound. The entropy and cryptographic assumptions supply secrecy, the bus inequalities supply schedulability, the residual inequality supplies integrity, and the Lyapunov inequality supplies stability.
\end{proof}

\subsection*{Tightness and looseness}
The renewal lower bound in \Cref{prop:renewal-lower-bound} is tight for a leakage process with exactly linear entropy decay. The response-time monotonicity in \Cref{lem:response-monotonicity} is exact for fixed-priority recurrence but non-smooth at ceiling boundaries. The Gaussian tag false-rejection bound is tight in exponential order for large quantization-to-noise ratio. The $H_\infty$ torque-delay bound is conservative when disturbance energy is concentrated away from the peak singular frequency. The small-gain Markov jump theorem is conservative because it uses a common scalar contraction envelope across modes, but it is tight in the limiting case where the worst mode persists and the perturbation gain reaches the inequality boundary.

\section{Limitations}

\subsection*{Notation and symbol definitions}
Let $\Omega_{\rm valid}$ be the intersection of all declared model-validity regions:
\[
\Omega_{\rm valid}=
\Omega_{\rm cert}\cap\Phi_{\rm cert}\cap
\{S_k:s_{\min}I\preceq S_k\preceq s_{\max}I\}
\cap
\{R_i:R_i\le D_i\}
\cap
\{K_k:\Hsmooth(K_k\mid Z_k)\ge\kappa_{\min}\}.
\]

\subsection*{Assumption set}
All conclusions are conditional. They depend on local linearization accuracy, conservative leakage accounting, correct nonce discipline, real-time task-set declarations, residual covariance validity, and cryptographic hardness assumptions.

\subsection*{Formal model}
If $x_{\rm e}\notin\Omega_{\rm cert}$, then the plant model is no longer certified by the manuscript. If $\rho_k$ has unmodeled channel regimes, then $C_{\A,k}$ and $\Delta H_{{\rm ch},k}$ may be inaccurate. If bus semantics differ from the abstract ARINC 429, MIL-STD-1553, or CAN timing models, then $R_i$ must be recomputed. If a PUF source has less entropy than assumed, \Cref{cor:puf-advantage} gives exactly how the security bound deteriorates.

\subsection*{Inference chain}
The most important negative result is \Cref{cx:separation}: isolated cryptographic success is insufficient. A second limitation is that overly conservative leakage modeling can force impractically small renewal periods:
\[
T_{\rm key}\to 0
\quad\text{as}\quad
\dot\ell_{\rm side}+\dot\ell_{\rm vib}\abs{\delta_{\rm tc}}+\dot\ell_{\rm ch}\to\infty.
\]
A third limitation is bus discontinuity:
\[
L_{\rm kem}\mapsto R_i
\]
has step changes induced by priority interference and word-level or frame-level transmission granularity.

\subsection*{Boundary conditions}
The manuscript does not certify an aircraft, an engine, a communication stack, a bus configuration, or a PUF implementation. It supplies a mathematical assurance envelope that must be instantiated with independently validated parameters.

\subsection*{Technical comment}
The strongest practical use of the model is negative: it identifies parameter combinations where a plausible cryptographic upgrade, such as larger post-quantum ciphertexts or longer attestation transcripts, consumes the timing and stability slack required by safety-critical propulsion control.

\section{Conclusion}

\subsection*{Notation and symbol definitions}
Let $\epsilon_{\rm total}$ denote the final composable error:
\[
\epsilon_{\rm total}=
\epsilon_{\rm kem}+\epsilon_{\rm aead}+\epsilon_{\rm zk}
+\epsilon_{\rm puf}+\epsilon_{\rm tag}+\epsilon_{\rm bus}+\epsilon_{\rm st}.
\]
Let $\delta_{\rm total}=R_i+\Delta_{\rm ver}+\Delta_T$ be total authenticated control latency.
Let $\widehat T_{{\rm key},k}=\min\{T_{{\rm key},k},T_{{\rm sync},k}\}$ denote the enforced renewal horizon when spool-synchronous refresh is enabled.

\subsection*{Assumption set}
The conclusion is restricted to the defensive, abstract, locally valid model defined above and to adversaries bounded by the specified cryptographic, leakage, fault, and timing interfaces.

\subsection*{Formal model}
A FADEC-coupled dual-spool turbofan cyber-physical system can be analyzed as a single cryptographic-control object when its release predicate simultaneously enforces:
\[
\Ver_{K_k}(y_k,\sigma_k)=1,\quad
\norm{r_k}_{S_k^{-1}}\le\eta_k,\quad
\delta_{\rm total}\le W_{{\rm act},k},\quad
\Hsmooth(K_k\mid Z_k)\ge \kappa_{\min},\quad
\mu_{\rm lat}(\delta_{\rm total})>0.
\]

\subsection*{Inference chain}
The manuscript established that radar reflection uncertainty changes key-renewal periods, Doppler attenuation changes adversarial channel-capacity bounds, shaft-speed telemetry noise changes tag verification reliability, turbine torque transfer functions change safe cryptographic delay, shaft torsional compliance changes authenticated telemetry sampling, spool acceleration changes actuation windows, compressor surge margin changes alarm conservatism, blade-tip clearance changes vibration leakage, PUF min-entropy changes distinguishing advantage, Kalman innovation residuals change integrity thresholds, post-quantum ciphertext expansion changes bus response time, and fuel-flow saturation changes latency-induced stability margin. The observable-attack-surface abstraction fixes which public traces legitimately enter the leakage model, while optional spool-synchronous refresh can tighten the effective renewal horizon from $T_{{\rm key},k}$ to $\widehat T_{{\rm key},k}$ without treating spool speed as a secret.

\subsection*{Scope of the assurance claim}
The model is not a substitute for certification, physical testing, formal implementation verification, or standards compliance; its role is to define mathematically auditable couplings that those processes can instantiate and challenge. Within that scope, post-quantum migration in safety-critical aerospace control cannot be assessed solely by key sizes or asymptotic hardness. The relevant assurance object is a coupled security-stability-latency envelope, formulated relative to an explicit observable attack surface and, when enabled, a physically synchronized renewal clock.

\end{document}